\documentclass[lettersize,journal]{IEEEtran}
\usepackage{amsmath,amsfonts}
\usepackage{algorithmic}
\usepackage{array}
\usepackage[caption=false,font=normalsize,labelfont=sf,textfont=sf]{subfig}
\usepackage{textcomp}
\usepackage{stfloats}
\usepackage{url}
\usepackage{verbatim}
\usepackage{graphicx}
\usepackage{amssymb}
\usepackage{amsthm}
\theoremstyle{plain}
\newtheorem{theorem}{Theorem}
\theoremstyle{plain}
\newtheorem{Lemma}{Lemma}
\theoremstyle{plain}
\newtheorem{prop}{Proposition}
\theoremstyle{plain}
\newtheorem{rem}{Remark}
\hyphenation{op-tical net-works semi-conduc-tor IEEE-Xplore}
\def\BibTeX{{\rm B\kern-.05em{\sc i\kern-.025em b}\kern-.08em
    T\kern-.1667em\lower.7ex\hbox{E}\kern-.125emX}}
\usepackage{balance}
\makeatletter
\renewcommand{\maketag@@@}[1]{\hbox{\m@th\normalsize\normalfont#1}}%
\makeatother

\begin{document}
\title{AC-Feasible Power Transfer Regions of Virtual
Power Plants: Characterization and Application}
\author{Wei Lin, \textit{Member}, \textit{IEEE}, Changhong Zhao, \textit{Senior Member}, \textit{IEEE}
\thanks{W. Lin and C. Zhao are with the Department of Information Engineering, the Chinese University of Hong Kong, New Territories, Hong Kong SAR. Emails: \{wlin, chzhao\}@ie.cuhk.edu.hk. (Corresponding author: C. Zhao)

This work was supported by Research Grants Council of Hong Kong through ECS Award No. 24210220 and the CUHK faculty startup grant.}}

\markboth{Journal of \LaTeX\ Class Files,~Vol.~**, No.~**, ****}%
{How to Use the IEEEtran \LaTeX \ Templates}

\maketitle
\vspace{-0.2in}
\begin{abstract}
Distributed energy resources (DERs) in distribution networks can be aggregated as a virtual power plant (VPP) for transmission-level operations. A critical challenge for such coordination is the complexity of the AC-feasible power transfer region between a VPP and the transmission system at their point of common coupling. To overcome this challenge, this paper develops a characterization method for such regions. The proposed method constructs linear constraints to inner-approximate the AC-feasible power transfer regions. To guarantee AC-feasibility, the parameters in these constraints are determined by applying the Brouwer fixed point theorem to the second-order Taylor expansion of the nonlinear Dist-Flow equations. Based on the power transfer regions characterized with our method, a transmission-level operation problem with VPP participation is formulated and solved through big-M linearization. The proposed methods are verified by numerical experiments in the IEEE 33-bus and IEEE 136-bus test systems.
\end{abstract}

\begin{IEEEkeywords}
Virtual power plant, feasible power transfer region, AC feasibility, transmission-level operation.
\end{IEEEkeywords}

\section{Introduction}
\IEEEPARstart{T}{he} percentages of \textit{distributed energy resources} (DERs) in distribution networks are increasing in power supplies \cite{1}. For transmission-level operations with the participation of DERs, the concept of a \textit{virtual power plant} (VPP) has drawn much attention, leading to practical projects (e.g., FENNIX \cite{2} and EDISON \cite{3}). A VPP paves one promising way to aggregate an entire distribution network with its DERs as a participant in transmission-level operations by adjusting its power transfers at the \textit{point of common coupling} (PCC) \cite{4}. Due to different operators of VPPs and the transmission network, a promising method for their coordination is based on characterizing the feasible power transfer region of a VPP \cite{5}. This region is a polytope in the domain of PCC power transfers which can be physically executed while respecting the VPP’s operational constraints. Such a region can be applied as constraints in transmission-level operations.

The current methods to characterize feasible power transfer regions can be categorized based on the power flow model used therein.

The first type of methods is based on linearized power flow models, mainly including  multi-parametric linear programming methods \cite{6}-\cite{7}, the vertex search method \cite{8}, the Fourier-Motzkin elimination method \cite{9}, the Stackelberg-gam-based method \cite{10}, and the robust-optimization-based methods \cite{11}-\cite{12}. The linearized power flow models used above have been observed to facilitate the constructions of feasible regions, e.g., through the easy matrix multiplication in linear systems in \cite{6}-\cite{9} and the strong duality of optimization problems in \cite{10}-\cite{12}. However, the linearized power flow models therein are approximations to the full AC power flow model with approximation errors; thus, their resulting feasible regions may not guarantee AC-feasibility.

In contrast, the full AC power flow model was directly adopted in the second type of methods. These methods can be further discussed based on what constraints are considered and guaranteed. Firstly, Banach fixed-point theorem in \cite{13}-\cite{14}, Brouwer fixed-point theorem in \cite{15}-\cite{16}, and Kantorovich fixed-point theorem in \cite{17} were employed to construct feasible regions to guarantee power flow solvability, while safety limits (e.g., voltage and current limits) were neglected. With the further consideration of safety limits, a feasible region can be calculated by optimization methods in \cite{18}-\cite{19}. Ref. \cite{18} solved non-convex optimizations to find boundary points which serve as a convex-hull-based approximation to the true feasible region, while Ref. \cite{19} reported a heuristic approach to remove infeasible regions based on the SOCP relaxation.

Different with \cite{18}-\cite{19} which do not guarantee power flow solvability and safety limits, Refs. \cite{20}-\cite{26} provide feasible regions with such guarantees. Refs. \cite{20}-\cite{21} characterized feasible regions by identifying the regular stable equilibrium manifolds of a quotient gradient system. However, Refs. \cite{20}-\cite{21} could only numerically search for the points, without providing an explicit formulation of the feasible region. This absence of explicit characterization may limit the application of the feasible region in transmission-level operations. Refs. \cite{22}-\cite{24} provide feasible regions based on tightened AC power flow formulations, while their power flow equations and safety limits are guaranteed under certain special conditions (e.g., the special monotonicity of the functions of proxy variables in \cite{22}-\cite{23}, and a special algebraic relationship of network parameters in \cite{24}). Ref. \cite{25} formulated and solved a nonconvex optimization based on the self-mapping of a polytope to get a sub-region whose shape is conservatively assumed as a box. Ref. \cite{26} further derived quadratic constraints to delineate a feasible region without a pre-set shape. However, linear constraints are current industry preferences in operations, planning and many practical applications \cite{27}-\cite{28}. This viewpoint motivates us in this paper to characterize a feasible region based on linear constraints. The major efforts in this paper are summarized below.

(1) A characterization method for the AC-feasible power transfer region of a VPP is proposed (Sec. III). Given a point in the domain of PCC power transfers, a mixed-integer nonlinear program (MINLP) is developed to get linear constraints for an AC-feasible sub-region around that point, by combining the Brouwer fixed point theorem with the second-order Taylor expansion of the nonlinear Dist-Flow equations. The proposed MINLP can be solved through decomposition into smaller-scale NLPs across the network branches. An exploration strategy is further discussed to find more AC-feasible sub-regions based on the vertices of the previously found sub-regions. The union of the found sub-regions can serve as an inner approximation to the true AC-feasible power transfer region.

(2) An application method for the AC-feasible power transfer region of a VPP is proposed (Sec. IV). Based on the linear constraints to characterize multiple AC-feasible sub-regions obtained above, a big-M formulation is developed to linearize the transmission-level operation problem with VPP participation. 

In Sec. V, the proposed methods are numerically validated in the IEEE 33-bus and the IEEE 136-bus test systems. Conclusions of this paper are summarized in Sec. VI

\section{Operational Constraints of VPPs}
 Consider a VPP encompassing a radial distribution network. Let $N=\left\{1, \ldots, n_{\mathrm{N}}\right\}$ denote the set of nodes. Let $G=\left\{1, \ldots, n_{\mathrm{G}}\right\}$. Let $L=\left\{1, \ldots, n_{\mathrm{L}}\right\}$ denote the set of branches. Particularly, the branches are treated as directed, i.e., the branch that connects nodes $i, j \in N$ is denoted by $i \longrightarrow j$ where node $i$ is closer to the root node than node $j$. The Dist-Flow equations in \cite{29} are employed in this paper to model the operational constraints of a VPP, as below.
\begin{equation}\label{D1}
\begin{aligned}
&P_{i j}-R_{i j} I_{i j}+\sum_{g \in G} e_{j g} P_{g}+e_{j 0} P_{\mathrm{PCC}} \\
&=\sum_{k, j \rightarrow k} P_{j k}+\sum_{n \in N} e_{j n} P_{n}, \forall j \in N,
\end{aligned}
\end{equation}
\begin{equation}\label{D2}
\begin{aligned}
&Q_{i j}-X_{i j} I_{i j}+\sum_{g \in G} e_{j g} Q_{g}+e_{j 0} Q_{\mathrm{PCC}} \\
&=\sum_{k, j \rightarrow k} Q_{j k}+\sum_{n \in N} e_{j n} Q_{n}, \forall j \in N,
\end{aligned}
\end{equation}
\begin{equation}\label{D3}
\begin{aligned}
V_{j} &=V_{i}-2\left(R_{i j} P_{i j}+X_{i j} Q_{i j}\right) \\
&+\left(\left(R_{i j}\right)^{2}+\left(X_{i j}\right)^{2}\right) I_{i j}, \forall i \rightarrow j,
\end{aligned}
\end{equation}
\begin{equation}\label{D4}
V_{i} I_{i j}=\left(P_{i j}\right)^{2}+\left(Q_{i j}\right)^{2}, \forall i \rightarrow j,
\end{equation}
\begin{equation}\label{D5}
V_{i}^{\min } \leq V_{i} \leq V_{i}^{\max }, \forall i \in N,
\end{equation}
\begin{equation}\label{D6}
I_{i j}^{\min } \leq I_{i j} \leq I_{i j}^{\max }, \forall i \rightarrow j,
\end{equation}
\begin{equation}\label{D7}
P_{g}^{\min } \leq P_{g} \leq P_{g}^{\max }, \forall g \in G,
\end{equation}
\begin{equation}\label{D8}
Q_{g}^{\min } \leq Q_{g} \leq Q_{g}^{\max }, \forall g \in G,
\end{equation}
\begin{equation}\label{D9}
P_{n}^{\min } \leq P_{n} \leq P_{n}^{\max }, \forall n \in N,
\end{equation}
\begin{equation}\label{D10}
Q_{n}^{\min } \leq Q_{n} \leq Q_{n}^{\max }, \forall n \in N,
\end{equation}
where $P_{i j}$ and $Q_{i j}$ are active and reactive branch power flows from node $i$ to node $j$, respectively; $I_{i j}$ is the squared current magnitude from node $i$ to node $j$; $P_g$ and $Q_g$ are active and reactive generation levels of unit $g$, respectively; $P_{\mathrm{PCC}}$ and $Q_{\mathrm{PCC}}$ are active and reactive power transfers at the PCC, respectively; $P_n$ and $Q_n$ are active and reactive power demand at node $n$, respectively. $R_{i j}$ and $X_{i j}$ are constant resistance and reactance from node $i$ to node $j$; $e_{j g}$, $e_{j 0}$, and $e_{j n}$ are incident indicators; $V_{i}$ is the squared voltage magnitude at node $i$; the superscripts “max” and “min” indicate upper and lower bounds, respectively.

For convenience of discussion, $\left(P_{i j}, Q_{i j}\right)$ of all the branches is stacked as $\left(\mathbf{P}_{\mathrm{L}}, \mathbf{Q}_{\mathrm{L}}\right)$ where $\mathbf{P}_{\mathrm{L}} \in \mathbb{R}^{n_{\mathrm{L}} \times 1}$ and $\mathbf{Q}_{\mathrm{L}} \in \mathbb{R}^{n_{\mathrm{L}} \times 1}$, $\left(P_{g}, Q_{g}\right)$ of all the distributed generation units is stacked as $\left(\mathbf{P}_{\mathrm{G}}, \mathbf{Q}_{\mathrm{G}}\right)$ where $\mathbf{P}_{\mathrm{G}} \in \mathbb{R}^{n_{\mathrm{G}} \times 1}$ and $\mathbf{Q}_{\mathrm{G}} \in \mathbb{R}^{n_{\mathrm{G}} \times 1}$, $\left(P_{n}, Q_{n}\right)$ of all the demand is stacked as $\left(\mathbf{P}_{N}, \mathbf{Q}_{\mathrm{N}}\right)$ where $\mathbf{P}_{\mathrm{N}} \in \mathbb{R}^{n_{\mathrm{N}} \times 1}$ and $\mathbf{Q}_{\mathrm{N}} \in \mathbb{R}^{n_{\mathrm{N}} \times 1}$, $V_{i}$ of all the nodes is stacked as $\mathbf{V} \in \mathbb{R}^{n_{\mathrm{N}} \times 1}$, $I_{i j}$ of all the branches is stacked as $\mathbf{I} \in \mathbb{R}^{n_{\mathrm{L}} \times 1}$ and power transfers $\left(P_{\mathrm{PCC}}, Q_{\mathrm{PCC}}\right)$ at the PCC is stacked as $\mathbf{u}_{\mathrm{PCC}} \in \mathbb{R}^{n _{\mathrm{PCC}} \times 1}$ where $n_{\mathrm{PCC}}=2$.

The controllable variables in a VPP are stacked as a vector $\mathbf{u}=\left[\begin{array}{llll}\mathbf{P}_{\mathrm{G}}^{T} & \mathbf{Q}_{\mathrm{G}}^{T} & \mathbf{P}_{\mathrm{N}}^{T} & \mathbf{Q}_{\mathrm{N}}^{T}\end{array}\right]^{T} \in \mathbb{R}^{n_{\mathbf{u}} \times 1}$ where $n_{\mathrm{u}}=2 n_{\mathrm{G}}+2 n_{\mathrm{N}}$. The state variables include $\mathbf{P}_{\mathrm{L},} \mathbf{Q}_{\mathrm{L},} \mathbf{V}$, and $\mathbf{I}$ whose total dimension is $n_{\mathrm{x}}=3 n_{\mathrm{L}}+n_{\mathrm{N}}$. For convenience of discussion, the compact formulation of the constraints (\ref{D1})-(\ref{D10}) is given below.
\begin{equation}\label{C1}
\mathbf{f}\left(\mathbf{P}_{\mathrm{L}}, \mathbf{Q}_{\mathrm{L}}, \mathbf{V}, \mathbf{I}\right)=\mathbf{K}_{1} \mathbf{u}_{\mathrm{PCC}}+\mathbf{K}_{2} \mathbf{u},
\end{equation}
\begin{equation}\label{C2}
\mathbf{V}_{\min } \leq \mathbf{V} \leq \mathbf{V}_{\max },
\end{equation}
\begin{equation}\label{C3}
\mathbf{I}_{\min } \leq \mathbf{I} \leq \mathbf{I}_{\max },
\end{equation}
\begin{equation}\label{C4}
\mathbf{u}_{\min } \leq \mathbf{u} \leq \mathbf{u}_{\max },
\end{equation}
where $\mathbf{f}=\left[\begin{array}{llll}f_{1} & f_{2} & \ldots & f_{n_{\mathrm{e}}}\end{array}\right]^{T}: \mathbb{R}^{n_{\mathrm{x}} \times 1} \rightarrow \mathbb{R}^{n_{\mathrm{e}} \times 1}$ and $n_{\mathrm{e}}=2 n_{\mathrm{L}}+2 n_{\mathrm{N}}$; $\mathbf{K}_{1} \in \mathbb{R}^{n_{\mathrm{e}} \times n_{\mathrm{PCC}}}$; $\mathbf{I}_{\min }$ and $\mathbf{I}_{\max } \in \mathbb{R}^{n_{\mathrm{L}} \times 1}$; $\mathbf{u}_{\min }$ and $\mathbf{u}_{\max } \in \mathbb{R}^{n_{\mathrm{u}} \times 1}$. The power flow equations (\ref{D1})-(\ref{D4}) are compacted in (\ref{C1}), the safety limits (\ref{D5})-(\ref{D6}) are compacted in  (\ref{C2})-(\ref{C3}), and the limits (\ref{D7})-(\ref{D10}) of controllable variables are compacted in (\ref{C4}).

The coordination between a VPP and a transmission network lies in $\mathbf{u}_{\mathrm{PCC}}$. To participate in  transmission-level operations, the operator of a VPP needs to characterize an AC-feasible transfer region in the domain of $\mathbf{u}_{\mathrm{PCC}}$ with which there is $\left(\mathbf{u}, \mathbf{P}_{\mathrm{L}}, \mathbf{Q}_{\mathrm{L}}, \mathbf{V}, \mathbf{I}\right)$ that satisfies (\ref{C1})-(\ref{C4}). Such a point $\mathbf{u}_{\mathrm{PCC}}$ is called AC-feasible, and the set of all AC-feasible points in the domain of $\mathbf{u}_{\mathrm{PCC}}$ is defined as the AC-feasible power transfer region:
\begin{equation}
\boldsymbol{\Omega} \triangleq\left\{\begin{array}{c|c}
\mathbf{u}_{\mathrm{PCC}} & \exists\left(\mathbf{u}_{\mathrm{PCC}}, \mathbf{u}, \mathbf{P}_{\mathrm{L}}, \mathbf{Q}_{\mathrm{L}}, \mathbf{V}, \mathbf{I}\right) \\
\in \mathbb{R}^{n_{\mathrm{PCC}} \times 1} &\textnormal{that satisfies}~(11)-(14)
\end{array}\right\}.
\end{equation}

In coming Sec. III, we will present a method to characterize the AC-feasible power transfer region $\boldsymbol{\Omega}$ with linear constraints.

\section{Characterization of AC-feasible Power Transfer Regions}

 The key idea of our characterization method is given in Fig. \ref{Fig.1}. Our method iteratively explores the sub-regions of the AC-feasible power transfer region $\boldsymbol{\Omega}$. In each iteration, the linear constraints will be constructed to characterize sub-regions in which every $\mathbf{u}_{\mathrm{PCC}}$ guarantees AC-feasibility around the current search point (e.g., the black dot in Fig. \ref{Fig.1}). The method to construct such a sub-region will be elaborated in Sec. III-A. Furthermore, an exploration strategy will be discussed to find more sub-regions based on the previously found sub-regions, as will be elaborated in Sec. III-B. The main idea of this strategy is that the vertices of the found sub-regions will be selected as the candidates for the next search points (e.g., the blue triangles in Fig. \ref{Fig.1}). The union of all the found sub-regions can serve as an inner approximation to the AC-feasible power transfer region $\boldsymbol{\Omega}$.
 
\begin{figure}
\centering
\setlength{\belowcaptionskip}{-6cm} 
\includegraphics[width=3.2in]{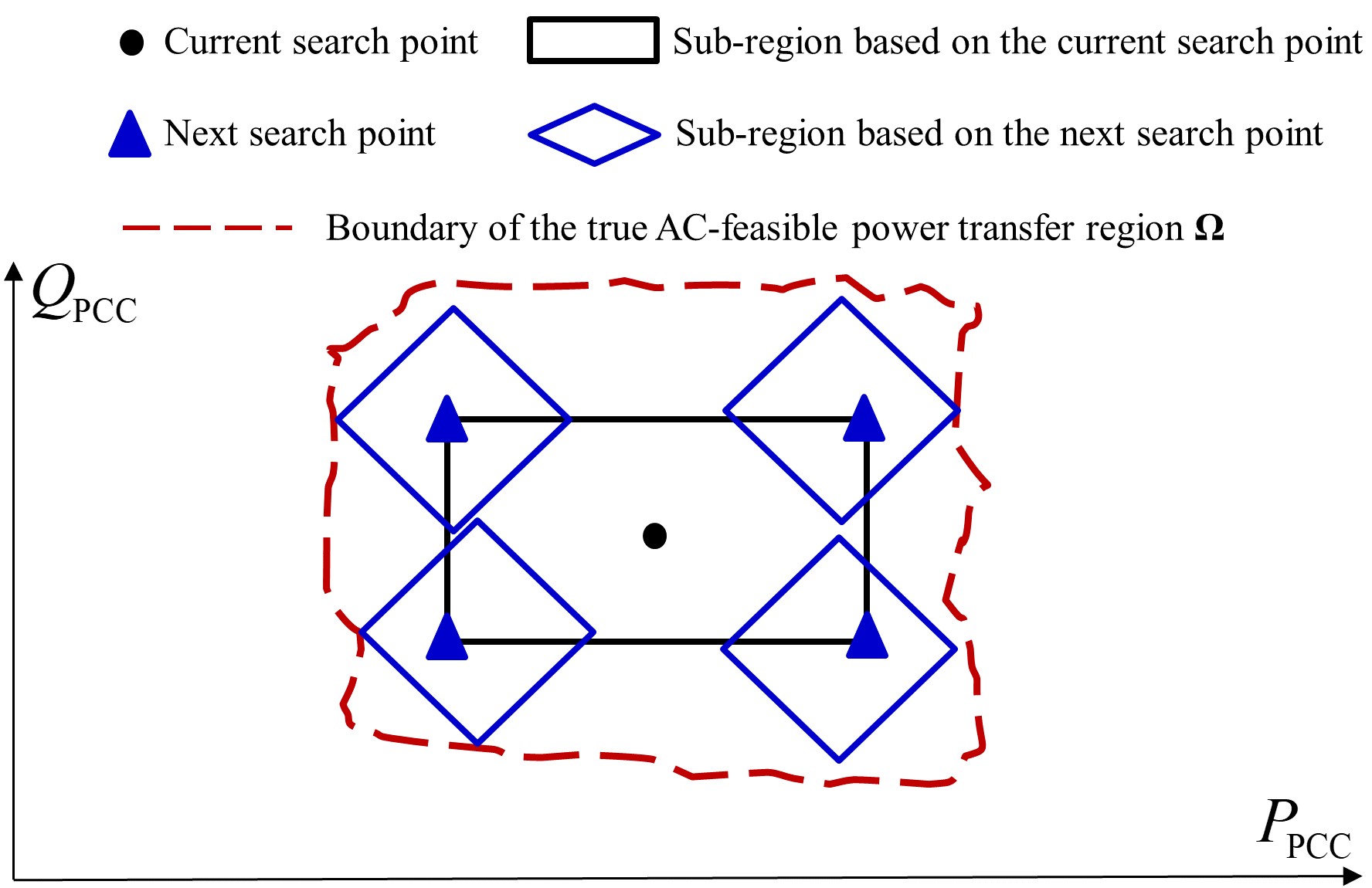}
\caption{An illustration of the proposed characterization method.}
\label{Fig.1}
\end{figure}

\subsection{Constructing an AC-feasible sub-region}
Our construction stems from a fixed-point representation of the power flow equation (\ref{C1}) around a known $\mathbf{u}_{\mathrm{PCC0}}$. Furthermore, the Brouwer’s fixed point theorem \cite{30} is employed to analyze the requirements to guarantee the solvability and safety of the fixed-point representation. Finally, the resulting requirements are used to formulate the linear constraints to characterize a sub-region of the AC-feasible power transfer region in the domain of $\mathbf{u}_{\mathrm{PCC}}$.

\subsubsection{Fixed-point representation of power flow}

Suppose given $\mathbf{u}_{\mathrm{PCC0}}$, a known $\left(\mathbf{u}_{0}, \mathbf{P}_{\mathrm{L} 0}, \mathbf{Q}_{\mathrm{L} 0}, \mathbf{V}_{0}, \mathbf{I}_{0}\right)$ exists with respect to the constraints (\ref{C1})-(\ref{C4}). For convenience of discussion, $\left(\mathbf{P}_{\mathrm{L}}, \mathbf{Q}_{\mathrm{L}}, \mathbf{V}, \mathbf{I}\right)$ is stacked as $\mathbf{X}$, and $\mathbf{x}_{0}=\left[\begin{array}{llll}\mathbf{P}_{\mathrm{L} 0}^{T} & \mathbf{Q}_{\mathrm{L} 0}^{T} & \mathbf{V}_{0}^{T} & \mathbf{I}_{0}^{T}\end{array}\right]^{T}$. For the $i^{t h}$ element $f_{i}(\mathbf{x})$ of $\mathbf{f}(\mathbf{x})$ in (\ref{C1}),   its second-order Taylor expansion is exact because $f_{i}(\mathbf{x})$ is linear or quadratic, i.e.,
\begin{equation}\label{F1}
\begin{aligned}
f_{i}(\mathbf{x}) &=f_{i}\left(\mathbf{x}_{0}\right)+\left.\nabla_{\mathbf{x}} f_{i}(\mathbf{x})\right|_{\mathbf{x}=\mathbf{x}_{0}}\left(\mathbf{x}-\mathbf{x}_{0}\right) \\
&+\left(\mathbf{x}-\mathbf{x}_{0}\right)^{T} \frac{\left.\nabla_{\mathbf{x}}^{2} f_{i}(\mathbf{x})\right|_{\mathbf{x}=\mathbf{x}_{0}}}{2}\left(\mathbf{x}-\mathbf{x}_{0}\right).
\end{aligned}
\end{equation}
where $\left.\nabla_{\mathbf{x}} f_{i}(\mathbf{x})\right|_{\mathbf{x}=\mathbf{x}_{0}} \in \mathbb{R}^{1 \times n_{\mathbf{x}}}$ and $\left.\nabla_{\mathbf{x}}^{2} f_{i}(\mathbf{x})\right|_{\mathbf{x}=\mathbf{x}_{0}} \in \mathbb{R}^{n_{\mathbf{x}} \times n_{\mathbf{x}}}$ can be determined at a given point $\mathbf{x}_{0}$.

Stacking (\ref{F1}) for each $f_{i}(\mathbf{x})$ yields an equivalent formulation of $\mathbf{f}(\mathbf{x})$ in (\ref{C1}), i.e., $\mathbf{f}(\mathbf{x})$ in (\ref{C1}) can be re-written as
\begin{equation}\label{F2}
\mathbf{f}(\mathbf{x})=\mathbf{f}\left(\mathbf{x}_{0}\right)+\left.\nabla_{\mathbf{x}} \mathbf{f}(\mathbf{x})\right|_{\mathbf{x}=\mathbf{x}_{0}}\left(\mathbf{x}-\mathbf{x}_{0}\right)+\mathbf{R},
\end{equation}
where $\left.\nabla_{\mathbf{x}} \mathbf{f}(\mathbf{x})\right|_{\mathbf{x}=\mathbf{x}_{0}} \in \mathbb{R}^{n_{\mathrm{e}} \times n_{\mathbf{x}}}$ is assumed to be non-singular at $\mathbf{x}_{0}$; $\mathbf{R} \in \mathbb{R}^{n_{e^{\times 1}}}$ is the second-order term and the $i^{t h}$ element $R_{i}$ in $\mathbf{R}$ is denoted as $\left.\left(\mathbf{x}-\mathbf{x}_{0}\right)^{T} \nabla_{\mathbf{x}}^{2} f_{i}(\mathbf{x})\right|_{\mathbf{x}=\mathbf{x}_{0}}\left(\mathbf{x}-\mathbf{x}_{0}\right)$.

Furthermore, replacing $\mathbf{f}(\mathbf{x})$ in (\ref{C1}) with (\ref{F2}) yields
\begin{equation}\label{F3}
\mathbf{K}_{1} \mathbf{u}_{\mathrm{PCC}}+\mathbf{K}_{2} \mathbf{u}=\mathbf{f}\left(\mathbf{x}_{0}\right)+\left.\nabla_{\mathbf{x}} \mathbf{f}(\mathbf{x})\right|_{\mathbf{x}=\mathbf{x}_{0}}\left(\mathbf{x}-\mathbf{x}_{0}\right)+\mathbf{R}.
\end{equation}

For $\left(\mathbf{x}_{0}, \mathbf{u}_{\mathrm{PCC} 0,}, \mathbf{u}_{0}\right)$, (\ref{C1}) holds as $\mathbf{f}\left(\mathbf{x}_{0}\right)=\mathbf{K}_{1} \mathbf{u}_{\mathrm{PCC} 0}+\mathbf{K}_{2} \mathbf{u}_{0}$. Consequently, substituting $\mathbf{f}\left(\mathbf{x}_{0}\right)=\mathbf{K}_{1} \mathbf{u}_{\mathrm{PCC} 0}+\mathbf{K}_{2} \mathbf{u}_{0}$ into (\ref{F3}) yields
\begin{equation}\label{F4}
\begin{aligned}
&\mathbf{K}_{1}\left(\mathbf{u}_{\mathrm{PCC}}-\mathbf{u}_{\mathrm{PCCO}}\right)+\mathbf{K}_{2}\left(\mathbf{u}-\mathbf{u}_{0}\right) \\
&=\left.\nabla_{\mathbf{x}} \mathbf{f}(\mathbf{x})\right|_{\mathbf{x}=\mathbf{x}_{0}}\left(\mathbf{x}-\mathbf{x}_{0}\right)+\mathbf{R} \\
&\Rightarrow \mathbf{K}_{1} \widetilde{\mathbf{u}}_{\mathrm{PCC}}+\mathbf{K}_{2} \widetilde{\mathbf{u}}=\left.\nabla_{\mathbf{x}} \mathbf{f}(\mathbf{x})\right|_{\mathbf{x}=\mathbf{x}_{0}} \widetilde{\mathbf{x}}+\mathbf{R}
\end{aligned}
\end{equation}
where $\widetilde{\mathbf{u}}_{\mathrm{PCC}}=\left(\mathbf{u}_{\mathrm{PCC}}-\mathbf{u}_{\mathrm{PCC} 0}\right) \in \mathbb{R}^{n_{\mathrm{PCC}} \times 1}$; $\widetilde{\mathbf{u}}=\left(\mathbf{u}-\mathbf{u}_{0}\right) \in \mathbb{R}^{n_{\mathrm{u}} \times 1}$; $\widetilde{\mathbf{x}}=\left(\mathbf{x}-\mathbf{x}_{0}\right) \in \mathbb{R}^{n_{\mathbf{x}} \times 1}$.

Equation (\ref{F4}) gives the fixed-point representation of (\ref{C1}) as
\begin{equation}\label{F5}
\widetilde{\mathbf{x}}=\mathbf{F}_{\widetilde{\mathbf{u}}_{\mathrm{PCC}}} \widetilde{\mathbf{u}}_{\mathrm{PCC}}+\mathbf{F}_{\tilde{\mathbf{u}}} \widetilde{\mathbf{u}}+\mathbf{F}_{\mathbf{x}} \mathbf{R},
\end{equation}
where $\mathbf{J}=\left.\nabla_{\mathbf{x}} \mathbf{f}(\mathbf{x})\right|_{\mathbf{x}=\mathbf{x}_{0}}\left(\left.\nabla_{\mathbf{x}} \mathbf{f}(\mathbf{x})\right|_{\mathbf{x}=\mathbf{x}_{0}}\right)^{T}$;~$\mathbf{F}_{\widetilde{\mathrm{u}}_\mathrm{P C C}}=\mathbf{J}^{-1}\left(\left.\nabla_{\mathbf{x}} \mathbf{f}(\mathbf{x})\right|_{\mathbf{x}=\mathbf{x}_{0}}\right)^{T} \mathbf{K}_{1} \in \mathbb{R}^{n_{\mathrm{x}} \times n_{\mathrm{PCC}}}$;~$\mathbf{F}_{\tilde{\mathbf{u}}}=\mathbf{J}^{-1}\left(\left.\nabla_{\mathbf{x}} \mathbf{f}(\mathbf{x})\right|_{\mathbf{x}=\mathbf{x}_{0}}\right)^{T} \mathbf{K}_{2} \in \mathbb{R}^{n_{\mathbf{x}} \times n_{\mathbf{u}}}$;~$\mathbf{F}_{\mathbf{x}}=-\mathbf{J}^{-1}\left(\left.\nabla_{\mathbf{x}} \mathbf{f}(\mathbf{x})\right|_{\mathbf{x}=\mathbf{x}_{0}}\right)^{T} \in \mathbb{R}^{n_{\mathbf{x}} \times n_{\mathrm{e}}}$.

Note that the second-order term $\mathbf{R}$ in (\ref{F5}) requires the second derivatives of the elements of $\mathbf{f}(\mathbf{x})$ in (\ref{C1}) which is a compact formulation of (\ref{D1})-(\ref{D4}). Indeed, the second-order terms associated with (\ref{D1})-(\ref{D3}) are zero. This fact allows a reformulation of $\mathbf{R}$ as

\begin{equation}\label{F6}
\mathbf{R}=\left[\begin{array}{ll}
\mathbf{0}^{\left(2 n_{\mathrm{N}}+n_{\mathrm{L}}\right) \times 1} & \widetilde{\mathbf{R}}
\end{array}\right],
\end{equation}
where $\widetilde{\mathbf{R}} \in \mathbb{R}^{n_{\mathrm{L}} \times 1}$ stacks the second-order terms associated with (\ref{D4}): 
\begin{equation}\label{F7}
\widetilde{\mathbf{R}}=-\widetilde{\mathbf{P}}_{\mathrm{L}} * \widetilde{\mathbf{P}}_{\mathrm{L}}-\widetilde{\mathbf{Q}}_{\mathrm{L}} * \widetilde{\mathbf{Q}}_{\mathrm{L}}+\left(\mathbf{e}_{\mathrm{VI}} \widetilde{\mathbf{V}}\right) * \widetilde{\mathbf{I}},
\end{equation}
where $\mathbf{e}_{\mathrm{VI}} \in \mathbb{R}^{n_{\mathrm{L}} \times n_{\mathrm{N}}}$ is the connection matrix between branches and nodes; the operator “*” indicates the Hadamard product, i.e., the component-wise multiplication. 

Finally, the fixed-point representation in (\ref{F5}) can be equivalently compacted as
\begin{equation}\label{F8}
\widetilde{\mathbf{x}}=\mathbf{F}_{\widetilde{\mathbf{u}}_{\mathrm{PCC}}} \widetilde{\mathbf{u}}_{\mathrm{PCC}}+\mathbf{F}_{\widetilde{\mathbf{u}}} \widetilde{\mathbf{u}}+\mathbf{F}_{\widetilde{\mathbf{x}}} \widetilde{\mathbf{R}}.
\end{equation}

\subsubsection{Analysis by Brouwer’s fixed point theorem}

We introduce the Brouwer’s fixed point theorem \cite{30} in coming Theorem 1 to analyze the solvability of the fixed-point representation (\ref{F8}) subject to the safety limits (\ref{C2})-(\ref{C3}).
\begin{theorem} 
Suppose there are 1) a compact and convex set $\widetilde{\mathbf{X}}$ in real space, and 2) a continuous mapping $\mathbf{F}(\widetilde{\mathbf{x}})$ whose dimension is the same as that of $\widetilde{\mathbf{X}}$. If  $\mathbf{F}(\widetilde{\mathbf{x}}) \in \widetilde{\mathbf{X}}$ for all $\widetilde{\mathbf{x}} \in \widetilde{\mathbf{X}}$,  a solution exists in $\widetilde{\mathbf{X}}$ for $\widetilde{\mathbf{x}}=\mathbf{F}(\widetilde{\mathbf{x}})$.
\end{theorem}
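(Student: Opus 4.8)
The statement is the classical Brouwer fixed point theorem, so the plan is to reduce the arbitrary compact convex set $\widetilde{\mathbf{X}}$ to a standard closed ball and then invoke the no-retraction principle. First I would dispose of dimension: writing $\widetilde{\mathbf{X}} \subset \mathbb{R}^{m}$, let $d$ be the dimension of its affine hull, and after translating and restricting to that affine subspace I may assume $\widetilde{\mathbf{X}}$ has nonempty interior in $\mathbb{R}^{d}$. A standard fact is that every compact convex body in $\mathbb{R}^{d}$ with nonempty interior is homeomorphic to the closed unit ball $B^{d}$: fixing an interior point as the origin, radial rescaling along each ray by the Minkowski gauge $p$ of $\widetilde{\mathbf{X}}$ gives a homeomorphism onto $B^{d}$, the gauge being continuous and positively homogeneous. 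Since the fixed-point property transfers across such a homeomorphism $h:\widetilde{\mathbf{X}}\to B^{d}$ --- a fixed point of $G:=h\circ\mathbf{F}\circ h^{-1}$ pulls back to one of $\mathbf{F}$, and $G$ maps $B^{d}$ into itself precisely because $\mathbf{F}$ maps $\widetilde{\mathbf{X}}$ into itself --- it suffices to prove the theorem on $B^{d}$.

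Next I would argue by contradiction on $B^{d}$. Suppose $G:B^{d}\to B^{d}$ is continuous with $G(\widetilde{\mathbf{x}})\neq\widetilde{\mathbf{x}}$ for every $\widetilde{\mathbf{x}}$. Then for each point I draw the ray starting at $G(\widetilde{\mathbf{x}})$ and passing through $\widetilde{\mathbf{x}}$, and follow it to where it exits the ball; call that exit point $r(\widetilde{\mathbf{x}})\in\partial B^{d}$. Solving the quadratic that intersects this ray with the unit sphere shows $r$ is well defined and continuous wherever $G(\widetilde{\mathbf{x}})\neq\widetilde{\mathbf{x}}$, and when $\widetilde{\mathbf{x}}$ already lies on $\partial B^{d}$ the exit point is $\widetilde{\mathbf{x}}$ itself. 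Hence $r$ would be a continuous retraction of $B^{d}$ onto its boundary $\partial B^{d}$, fixing the boundary pointwise.

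The heart of the argument, and the step I expect to be the genuine obstacle, is the no-retraction theorem: there is no continuous retraction of $B^{d}$ onto $\partial B^{d}$. This is where an irreducible topological input is required, and I would supply it by one of the standard routes --- computing that the map induced on $(d-1)$-st homology (equivalently, the Brouwer degree of the boundary restriction) cannot simultaneously be the identity on $\partial B^{d}$ and factor through the contractible $B^{d}$, or, if a self-contained combinatorial proof is preferred, by first establishing the fixed-point claim directly on the simplex via Sperner's lemma together with a compactness argument over a sequence of refining triangulations. Either way, the existence of the retraction $r$ contradicts this impossibility, so $G$ must have a fixed point, and transporting it back through $h$ yields the desired $\widetilde{\mathbf{x}}=\mathbf{F}(\widetilde{\mathbf{x}})$ in $\widetilde{\mathbf{X}}$. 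Since the result is needed here only as a classical tool, in the manuscript I would state it with the citation to \cite{30} rather than reproduce this topological core.
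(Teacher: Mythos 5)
Your proposal is correct, but it answers a question the paper never takes up: the paper does not prove this statement at all. Theorem 1 is the classical Brouwer fixed point theorem, and the paper treats it purely as imported background, stating it with the citation to \cite{30} and moving directly to its actual contribution, namely applying the theorem to the fixed-point representation of the power flow equations via Proposition 1. Your argument is the standard topological proof: pass to the affine hull so the compact convex set has nonempty interior, use the Minkowski-gauge radial map to get a homeomorphism onto the closed ball $B^{d}$, note that the fixed-point property transfers across homeomorphisms, and then derive a contradiction from a fixed-point-free map by constructing the ray retraction $r:B^{d}\to\partial B^{d}$, which the no-retraction theorem forbids. These steps are all sound, including the boundary-fixing property and continuity of $r$; the one place you flag as the irreducible input --- the no-retraction theorem, supplied by homology/degree theory or by Sperner's lemma on the simplex --- is indeed where the genuine topological content lives, and deferring it to standard machinery is legitimate (minor quibbles only: the degenerate cases $d\le 1$ need a separate one-line argument, since the homology argument as stated requires $d\ge 2$). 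What your write-up buys is self-containedness for a reader who wants to see why the theorem holds; what the paper's treatment buys is economy, which is appropriate because nothing downstream (Propositions 1 and 2, the construction of the sub-regions) depends on how Brouwer's theorem is proved, only on its statement. Your closing remark --- that in a manuscript you would simply cite \cite{30} rather than reproduce the topological core --- is exactly what the paper does.
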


For the fixed-point representation (\ref{F8}), the Brouwer’s fixed point theorem can be applied to guarantee its solvability by 1) regarding $\left(\widetilde{\mathbf{u}}_{\mathrm{PCC}}, \widetilde{\mathbf{u}}\right)$ as parameters, 2) $\widetilde{\mathbf{x}}=\left(\widetilde{\mathbf{P}}_{\mathrm{L}}, \widetilde{\mathbf{Q}}_{\mathrm{L}}, \widetilde{\mathbf{V}}, \widetilde{\mathbf{I}}\right)$ as variables, and 3) $\mathbf{F}(\widetilde{\mathbf{x}})$ as $\mathbf{F}_{\widetilde{\mathbf{u}}_{\mathrm{PCC}}} \widetilde{\mathbf{u}}_{\mathrm{PCC}}+\mathbf{F}_{\widetilde{\mathbf{u}}} \widetilde{\mathbf{u}}+\mathbf{F}_{\widetilde{\mathbf{x}}} \widetilde{\mathbf{R}}$. Particularly, the application condition of the Brouwer’s fixed point theorem relies on 1) the selection of a compact and convex set $\widetilde{\mathbf{X}}$, and 2) how to guarantee $\mathbf{F}(\widetilde{\mathbf{x}}) \in \widetilde{\mathbf{X}}$ for all $\widetilde{\mathbf{x}} \in \widetilde{\mathbf{X}}$. We next discuss these two issues in more detail.

\textit{Issue 1:} how to select $\widetilde{\mathbf{X}}$. Generally, any compact and convex set in the domain of $\widetilde{\mathbf{x}}$ can be used in the Brouwer’s fixed point theorem. Particularly, if we select $\widetilde{\mathbf{X}}$ as a subset of the constraints (\ref{C2})-(\ref{C3}),the solution from the Brouwer’s fixed point theorem will inherently satisfy the constraints (\ref{C2})-(\ref{C3}). In this paper, we construct such a set $\widetilde{\mathbf{X}}$ as 
\begin{small}
\begin{equation}\label{X1}
\widetilde{\mathbf{X}} \triangleq\left\{\begin{array}{l|l}
\left(\tilde{\mathbf{P}}_{\mathrm{L}}, \tilde{\mathbf{Q}}_{\mathrm{L}}, \tilde{\mathbf{V}}, \tilde{\mathbf{I}}\right) \in \mathbb{R}^{n_{\mathrm{x}} \times 1} & \begin{array}{l}
\widetilde{\mathbf{V}}_{\min} \leq \widetilde{\mathbf{V}} \leq \widetilde{\mathbf{V}}_{\max} \\
\widetilde{\mathbf{I}}_{\min} \leq \widetilde{\mathbf{I}} \leq \widetilde{\mathbf{I}}_{\max}
\end{array}
\end{array}\right\}
\end{equation}
\end{small}
where $\left(\widetilde{\mathbf{V}}_{\min }, \widetilde{\mathbf{V}}_{\max }, \widetilde{\mathbf{I}}_{\min }, \widetilde{\mathbf{I}}_{\max }\right)$ is restricted by 
\begin{equation}\label{X2}
\left\{\begin{array}{l}
\mathbf{V}_{\min }-\mathbf{V}_{0} \leq \tilde{\mathbf{V}}_{\min } \leq \tilde{\mathbf{V}}_{\max } \leq \mathbf{V}_{\max }-\mathbf{V}_{0} \\
\mathbf{I}_{\min }-\mathbf{I}_{0} \leq \tilde{\mathbf{I}}_{\min } \leq \tilde{\mathbf{I}}_{\max } \leq \mathbf{I}_{\max }-\mathbf{I}_{0}
\end{array}\right..
\end{equation}

Note that $\mathbf{V}_{\min }-\mathbf{V}_{0}$ and $\mathbf{I}_{\min }-\mathbf{I}_{0}$ are non-positive, while $\mathbf{V}_{\max }-\mathbf{V}_{0}$ and $\mathbf{I}_{\max }-\mathbf{I}_{0}$ are non-negative. How to determine $\left(\widetilde{\mathbf{V}}_{\min }, \widetilde{\mathbf{V}}_{\max }, \widetilde{\mathbf{I}}_{\min }, \widetilde{\mathbf{I}}_{\max }\right)$ will be discussed later, since the determination of $\left(\widetilde{\mathbf{V}}_{\min }, \widetilde{\mathbf{V}}_{\max }, \widetilde{\mathbf{I}}_{\min }, \widetilde{\mathbf{I}}_{\max }\right)$ impacts how to guarantee $\mathbf{F}(\widetilde{\mathbf{x}}) \in \widetilde{\mathbf{X}}$ for all $\widetilde{\mathbf{x}} \in \widetilde{\mathbf{X}}$.

\textit{Issue 2:} how to guarantee $\mathbf{F}(\widetilde{\mathbf{x}}) \in \widetilde{\mathbf{X}}$ for all $\widetilde{\mathbf{x}} \in \widetilde{\mathbf{X}}$. To fulfill the solvability of the fixed-point representation (\ref{F8}) in the Brouwer’s fixed point theorem, we develop an approach to show how to guarantee $\mathbf{F}(\widetilde{\mathbf{x}}) \in \widetilde{\mathbf{X}}$ for all $\widetilde{\mathbf{x}} \in \widetilde{\mathbf{X}}$ when the appropriate parameters $\left(\widetilde{\mathbf{u}}_{\mathrm{PCC}}, \widetilde{\mathbf{u}}\right)$ are given. Suppose one can find bounds $\widetilde{\mathbf{R}}_{\text {min}}$ and $\widetilde{\mathbf{R}}_{\text {max}}$ such that $\widetilde{\mathbf{R}}_{\min} \leq \widetilde{\mathbf{R}} \leq \widetilde{\mathbf{R}}_{\max}$ for all $\widetilde{\mathbf{x}} \in \widetilde{\mathbf{X}}$. The method to find $\widetilde{\mathbf{R}}_{\text {min}}$ and $\widetilde{\mathbf{R}}_{\text {max}}$ will be elaborated later. Let $\mathbf{A}_{1} \in \mathbb{R}^{\left(n_{\mathrm{L}}+n_{\mathrm{N}}\right) \times n_{\mathrm{x}}}$ be a selection matrix to make $\mathbf{A}_{1}\left[\begin{array}{llll}\widetilde{\mathbf{P}}_{L}^{T} & \widetilde{\mathbf{Q}}_{\mathrm{L}}^{T} & \widetilde{\mathbf{V}}^{T} & \widetilde{\mathbf{I}}^{T}\end{array}\right]^{T}=\left[\begin{array}{ll}\widetilde{\mathbf{V}}^{T} & \widetilde{\mathbf{I}}^{T}\end{array}\right]^{T}$. Define $\mathbf{L}_{\widetilde{\mathbf{x}}}=\mathbf{A}_{1} \mathbf{F}_{\widetilde{\mathbf{x}}} \in \mathbb{R}^{\left(n_{\mathrm{L}}+n_{\mathrm{N}}\right) \times n_{\mathrm{x}}}$. Define $\mathbf{M}_{\widetilde{\mathbf{x}}}^{+}=\mathbf{L}_{\widetilde{\mathbf{x}}}^{+}-\mathbf{L}_{\widetilde{\mathbf{x}}}^{-} \in\left\{0 \cup \mathbb{R}_{+}^{\left(n_{\mathrm{L}}+n_{\mathrm{N}}\right) \times n_{\mathrm{L}}}\right\}$, where $\mathbf{L}_{\widetilde{\mathbf{x}}}^{+}$ and $\mathbf{L}_{\widetilde{\mathbf{x}}}^{-}$ are respectively the non-negative and non-positive parts of $\mathbf{L}_{\widetilde{\mathbf{x}}}$, with $\mathbf{L}_{\widetilde{\mathbf{x}}}=\mathbf{L}_{\widetilde{\mathbf{x}}}^{+}+\mathbf{L}_{\widetilde{\mathbf{x}}}^{-}$.

\begin{prop} 
If the following inequality holds:
\begin{equation}\label{P1-1}
\begin{aligned}
&\left[\begin{array}{cc}
\widetilde{\mathbf{V}}_{\min }^{T} & \widetilde{\mathbf{I}}_{\min }^{T}
\end{array}\right]^{T}-\left[\begin{array}{cc}
\widetilde{\mathbf{V}}_{\max }^{T} & \widetilde{\mathbf{I}}_{\max }^{T}
\end{array}\right]^{T}\\
&\leq \mathbf{M}_{\widetilde{\mathbf{x}}}^{+} \widetilde{\mathbf{R}}_{\min }-\mathbf{M}_{\widetilde{\mathbf{x}}}^{+} \widetilde{\mathbf{R}}_{\max }
\end{aligned}.
\end{equation}
then there exists a vector $\mathbf{U} \in \mathbb{R}^{n_{\mathrm{x}} \times 1}$ such that $\mathbf{U}+\mathbf{F}_{\widetilde{\mathbf{x}}} \widetilde{\mathbf{R}}$ in $\widetilde{\mathbf{X}}$ for all $\widetilde{\mathbf{x}} \in \widetilde{\mathbf{X}}$. If we make $\mathbf{U}=\mathbf{F}_{\widetilde{\mathbf{u}}_\mathrm{PCC}} \widetilde{\mathbf{u}}_{\mathrm{PCC}}+\mathbf{F}_{\widetilde{\mathbf{u}}} \widetilde{\mathbf{u}}$, then $\mathbf{F}(\widetilde{\mathbf{x}}) \in \widetilde{\mathbf{X}}$ for all $\widetilde{\mathbf{x}} \in \widetilde{\mathbf{X}}$ can be fulfilled.
\end{prop}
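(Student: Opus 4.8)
The plan is to recognize that the containment $\mathbf{U}+\mathbf{F}_{\widetilde{\mathbf{x}}}\widetilde{\mathbf{R}}\in\widetilde{\mathbf{X}}$ only constrains the $(\widetilde{\mathbf{V}},\widetilde{\mathbf{I}})$ block of the vector, since the set $\widetilde{\mathbf{X}}$ in (\ref{X1}) places no restriction on the $\widetilde{\mathbf{P}}_{\mathrm{L}},\widetilde{\mathbf{Q}}_{\mathrm{L}}$ coordinates. First I would left-multiply by the selection matrix $\mathbf{A}_{1}$, so that, with $\mathbf{L}_{\widetilde{\mathbf{x}}}=\mathbf{A}_{1}\mathbf{F}_{\widetilde{\mathbf{x}}}$, membership in $\widetilde{\mathbf{X}}$ becomes the elementwise condition $[\widetilde{\mathbf{V}}_{\min}^{T}\ \widetilde{\mathbf{I}}_{\min}^{T}]^{T}\le\mathbf{A}_{1}\mathbf{U}+\mathbf{L}_{\widetilde{\mathbf{x}}}\widetilde{\mathbf{R}}\le[\widetilde{\mathbf{V}}_{\max}^{T}\ \widetilde{\mathbf{I}}_{\max}^{T}]^{T}$. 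Because the components of $\mathbf{U}$ feeding $\widetilde{\mathbf{P}}_{\mathrm{L}},\widetilde{\mathbf{Q}}_{\mathrm{L}}$ are free, the entire question reduces to whether the vector $\mathbf{w}:=\mathbf{A}_{1}\mathbf{U}\in\mathbb{R}^{(n_{\mathrm{L}}+n_{\mathrm{N}})\times1}$ can be chosen to keep this affine term inside the box for every admissible $\widetilde{\mathbf{R}}$.

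Second, I would discharge the quantifier ``for all $\widetilde{\mathbf{x}}\in\widetilde{\mathbf{X}}$'' by the conservative (hence sufficient) relaxation ``for all $\widetilde{\mathbf{R}}\in[\widetilde{\mathbf{R}}_{\min},\widetilde{\mathbf{R}}_{\max}]$'', which is legitimate because the assumed bounds give $\widetilde{\mathbf{R}}_{\min}\le\widetilde{\mathbf{R}}\le\widetilde{\mathbf{R}}_{\max}$ on all of $\widetilde{\mathbf{X}}$. Over this box the elementwise extrema of the linear image $\mathbf{L}_{\widetilde{\mathbf{x}}}\widetilde{\mathbf{R}}$ are attained at sign-matched corners under the decomposition $\mathbf{L}_{\widetilde{\mathbf{x}}}=\mathbf{L}_{\widetilde{\mathbf{x}}}^{+}+\mathbf{L}_{\widetilde{\mathbf{x}}}^{-}$: the componentwise maximum is $\mathbf{L}_{\widetilde{\mathbf{x}}}^{+}\widetilde{\mathbf{R}}_{\max}+\mathbf{L}_{\widetilde{\mathbf{x}}}^{-}\widetilde{\mathbf{R}}_{\min}$ and the minimum is $\mathbf{L}_{\widetilde{\mathbf{x}}}^{+}\widetilde{\mathbf{R}}_{\min}+\mathbf{L}_{\widetilde{\mathbf{x}}}^{-}\widetilde{\mathbf{R}}_{\max}$. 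The robust containment therefore holds for the whole box iff $\mathbf{w}$ simultaneously satisfies an upper bound $\mathbf{w}\le[\widetilde{\mathbf{V}}_{\max}^{T}\ \widetilde{\mathbf{I}}_{\max}^{T}]^{T}-\mathbf{L}_{\widetilde{\mathbf{x}}}^{+}\widetilde{\mathbf{R}}_{\max}-\mathbf{L}_{\widetilde{\mathbf{x}}}^{-}\widetilde{\mathbf{R}}_{\min}$ and a lower bound $\mathbf{w}\ge[\widetilde{\mathbf{V}}_{\min}^{T}\ \widetilde{\mathbf{I}}_{\min}^{T}]^{T}-\mathbf{L}_{\widetilde{\mathbf{x}}}^{+}\widetilde{\mathbf{R}}_{\min}-\mathbf{L}_{\widetilde{\mathbf{x}}}^{-}\widetilde{\mathbf{R}}_{\max}$.

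Third, the existence of a $\mathbf{w}$ meeting both bounds is, coordinatewise, just the requirement that the lower bound not exceed the upper bound. Subtracting the two and cancelling $\mathbf{w}$ leaves an inequality in the data alone; collecting the $\widetilde{\mathbf{R}}$ terms and using $\mathbf{M}_{\widetilde{\mathbf{x}}}^{+}=\mathbf{L}_{\widetilde{\mathbf{x}}}^{+}-\mathbf{L}_{\widetilde{\mathbf{x}}}^{-}$ collapses the right-hand side to $\mathbf{M}_{\widetilde{\mathbf{x}}}^{+}\widetilde{\mathbf{R}}_{\min}-\mathbf{M}_{\widetilde{\mathbf{x}}}^{+}\widetilde{\mathbf{R}}_{\max}$, which is precisely (\ref{P1-1}). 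This proves the first claim: under (\ref{P1-1}) a feasible $\mathbf{w}$ exists, and extending it to a full $\mathbf{U}$ by assigning arbitrary values to the $\widetilde{\mathbf{P}}_{\mathrm{L}},\widetilde{\mathbf{Q}}_{\mathrm{L}}$ coordinates produces the required vector. For the closing statement, choosing $(\widetilde{\mathbf{u}}_{\mathrm{PCC}},\widetilde{\mathbf{u}})$ so that $\mathbf{U}=\mathbf{F}_{\widetilde{\mathbf{u}}_\mathrm{PCC}}\widetilde{\mathbf{u}}_{\mathrm{PCC}}+\mathbf{F}_{\widetilde{\mathbf{u}}}\widetilde{\mathbf{u}}$ makes $\mathbf{U}+\mathbf{F}_{\widetilde{\mathbf{x}}}\widetilde{\mathbf{R}}$ identical to $\mathbf{F}(\widetilde{\mathbf{x}})$ in (\ref{F8}), so $\mathbf{F}(\widetilde{\mathbf{x}})\in\widetilde{\mathbf{X}}$ for all $\widetilde{\mathbf{x}}\in\widetilde{\mathbf{X}}$.

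The routine parts are the linear-algebra cancellations; the step demanding the most care is the worst-case bounding of $\mathbf{L}_{\widetilde{\mathbf{x}}}\widetilde{\mathbf{R}}$. I would need to verify that the extrema of a linear form over an interval box are genuinely achieved at the sign-matched corners, and to track the signs through the $\mathbf{L}_{\widetilde{\mathbf{x}}}^{+},\mathbf{L}_{\widetilde{\mathbf{x}}}^{-}$ split so that the elimination of $\mathbf{w}$ yields exactly $\mathbf{M}_{\widetilde{\mathbf{x}}}^{+}$ rather than a sign-flipped variant. I also expect to note explicitly that relaxing ``all $\widetilde{\mathbf{x}}\in\widetilde{\mathbf{X}}$'' to ``all $\widetilde{\mathbf{R}}$ in the full box'' is only a sufficient direction, since $\widetilde{\mathbf{R}}$ need not sweep the entire box as $\widetilde{\mathbf{x}}$ ranges over $\widetilde{\mathbf{X}}$; this is where the argument is conservative but still valid for the Proposition's ``if'' form.
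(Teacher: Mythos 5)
Your proposal is correct and follows essentially the same route as the paper's proof: the same $\mathbf{L}_{\widetilde{\mathbf{x}}}=\mathbf{L}_{\widetilde{\mathbf{x}}}^{+}+\mathbf{L}_{\widetilde{\mathbf{x}}}^{-}$ split, the same sign-matched corner bounds on $\mathbf{L}_{\widetilde{\mathbf{x}}}\widetilde{\mathbf{R}}$ (the paper's (\ref{P1-4})), and the same existence-of-an-intermediate-vector argument for $\mathbf{A}_{1}\mathbf{U}$ (the paper's (\ref{P1-3})), merely run in the reverse direction (you derive the feasibility condition on $\mathbf{w}$ and collapse it to (\ref{P1-1}), whereas the paper starts from (\ref{P1-1}) and unpacks it). Your explicit remarks that $\widetilde{\mathbf{X}}$ constrains only the $(\widetilde{\mathbf{V}},\widetilde{\mathbf{I}})$ block and that replacing the range of $\widetilde{\mathbf{R}}$ over $\widetilde{\mathbf{X}}$ by the full box $[\widetilde{\mathbf{R}}_{\min},\widetilde{\mathbf{R}}_{\max}]$ is a sufficient (conservative) relaxation are points the paper leaves implicit, but they do not change the argument.
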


\begin{proof}
The fulfillment of the inequality  (\ref{P1-1}) yields
\begin{equation}\label{P1-2}
\begin{aligned}
&{\left[\begin{array}{ll}
\widetilde{\mathbf{V}}_{\min }^{T} & \widetilde{\mathbf{I}}_{\min }^{T}
\end{array}\right]^{T}-\mathbf{L}_{\widetilde{\mathbf{x}}}^{+} \widetilde{\mathbf{R}}_{\min }-\mathbf{L}_{\widetilde{\mathbf{x}}}^{-} \widetilde{\mathbf{R}}_{\max }} \\
&\leq\left[\begin{array}{cc}
\widetilde{\mathbf{V}}_{\max }^{T} & \widetilde{\mathbf{I}}_{\max }^{T}
\end{array}\right]^{T}-\mathbf{L}_{\widetilde{\mathbf{x}}}^{-} \widetilde{\mathbf{R}}_{\min }-\mathbf{L}_{\widetilde{\mathbf{x}}}^{+} \widetilde{\mathbf{R}}_{\max }
\end{aligned}.
\end{equation}

Then, there exists a vector $\mathbf{U}$ such that
\begin{equation}\label{P1-3}
\begin{aligned}
&\begin{cases}{\left[\begin{array}{ll}
\widetilde{\mathbf{V}}_{\min }^{T} & \widetilde{\mathbf{I}}_{\min }^{T}
\end{array}\right]^{T}-\mathbf{L}_{\widetilde{\mathbf{x}}}^{-} \widetilde{\mathbf{R}}_{\max }-\mathbf{L}_{\widetilde{\mathbf{x}}}^{+} \widetilde{\mathbf{R}}_{\min } \leq \mathbf{A}_{1} \mathbf{U}} \\
\mathbf{A}_{1} \mathbf{U} \leq\left[\begin{array}{cc}
\widetilde{\mathbf{V}}_{\max }^{T} & \widetilde{\mathbf{I}}_{\max }^{T}
\end{array}\right]^{T}-\mathbf{L}_{\widetilde{\mathbf{x}}}^{+} \widetilde{\mathbf{R}}_{\max }-\mathbf{L}_{\widetilde{\mathbf{x}}}^{-} \widetilde{\mathbf{R}}_{\min }\end{cases}\\
&\Rightarrow\left\{\begin{array}{l}
{\left[\begin{array}{lc}
\widetilde{\mathbf{V}}_{\min }^{T} & \widetilde{\mathbf{I}}_{\min }^{T}
\end{array}\right]^{T} \leq \mathbf{A}_{1} \mathbf{U}+\mathbf{L}_{\widetilde{\mathbf{x}}}^{-} \widetilde{\mathbf{R}}_{\max }+\mathbf{L}_{\widetilde{\mathbf{x}}}^{+} \widetilde{\mathbf{R}}_{\min }} \\
\mathbf{A}_{1} \mathbf{U}+\mathbf{L}_{\widetilde{\mathbf{x}}}^{+} \widetilde{\mathbf{R}}_{\max }+\mathbf{L}_{\widetilde{\mathbf{x}}}^{-} \widetilde{\mathbf{R}}_{\min } \leq\left[\begin{array}{cc}
\widetilde{\mathbf{V}}_{\max }^{T} & \widetilde{\mathbf{I}}_{\max }^{T}
\end{array}\right]^{T}
\end{array}\right.
\end{aligned}.
\end{equation}

Since $\mathbf{L}_{\widetilde{\mathbf{x}}}^{+}$ is non-negative and $\mathbf{L}_{\widetilde{\mathbf{x}}}^{-}$ is non-positive, $\mathbf{L}_{\widetilde{\mathbf{x}}} \widetilde{\mathbf{R}}=\left(\mathbf{L}_{\widetilde{\mathbf{x}}}^{+}+\mathbf{L}_{\widetilde{\mathbf{x}}}^{-}\right) \widetilde{\mathbf{R}}$ can be bounded by
\begin{equation}\label{P1-4}
\mathbf{L}_{\widetilde{\mathbf{x}}}^{-} \widetilde{\mathbf{R}}_{\max }+\mathbf{L}_{\widetilde{\mathbf{x}}}^{+} \widetilde{\mathbf{R}}_{\min } \leq \mathbf{L}_{\widetilde{\mathbf{x}}} \widetilde{\mathbf{R}} \leq \mathbf{L}_{\widetilde{\mathbf{x}}}^{+} \widetilde{\mathbf{R}}_{\max }+\mathbf{L}_{\widetilde{\mathbf{x}}}^{-} \widetilde{\mathbf{R}}_{\min }.
\end{equation}

Combining (\ref{P1-3}) and (\ref{P1-4}) yields $\mathbf{U}+\mathbf{F}_{\widetilde{\mathbf{x}}} \widetilde{\mathbf{R}}$ in $\widetilde{\mathbf{X}}$ for all $\widetilde{\mathbf{x}} \in \widetilde{\mathbf{X}}$, i.e., Proposition 1 holds.
\end{proof}

\subsubsection{Linear constraints to characterize an AC-feasible sub-region}

Proposition 1 inspires the construction of the linear constraints to characterize a sub-region of the AC-feasible power transfer region. As shown in (\ref{P1-3}), once $\left(\widetilde{\mathbf{V}}_{\min }, \widetilde{\mathbf{V}}_{\max }, \widetilde{\mathbf{I}}_{\min }, \widetilde{\mathbf{I}}_{\max }\right)$ and $\left(\widetilde{\mathbf{R}}_{\min }, \widetilde{\mathbf{R}}_{\max }\right)$ that satisfy (\ref{P1-1}) are given, the following polytope $\mathbf{\widetilde{\Omega}}$ composed of linear constraints in the domain of $\left(\widetilde{\mathbf{u}}_{\mathrm{PCC}}, \widetilde{\mathbf{u}}\right)$ can be given 
\begin{footnotesize}
\begin{equation}\label{Om1}
\widetilde{\mathbf{\Omega}} \triangleq\left\{\begin{array}{c|l}
\left(\widetilde{\mathbf{u}}_{\mathrm{PCC}}, \widetilde{\mathbf{u}}\right) & \mathbf{b}_{\mathrm{min}} \leq \mathbf{A}_{1} \mathbf{F}_{\widetilde{\mathbf{u}} \mathrm{PCC}} \widetilde{\mathbf{u}}_{\mathrm{PCC}}+\mathbf{A}_{1} \mathbf{F}_{\widetilde{\mathbf{u}}} \widetilde{\mathbf{u}} \\
\in \mathbb{R}^{\left(n_{\mathrm{PCC}}+n_{\mathrm{u}}\right) \times 1} & \mathbf{A}_{1} \mathbf{F}_{\widetilde{\mathbf{u}}_{\mathrm{PCC}}} \widetilde{\mathbf{u}}_{\mathrm{PCC}}+\mathbf{A}_{1} \mathbf{F}_{\widetilde{\mathbf{u}}} \widetilde{\mathbf{u}} \leq \mathbf{b}_{\mathrm{max}}
\end{array}\right\},
\end{equation}
\end{footnotesize}
\noindent where $\mathbf{b}_{\min }=\left[\begin{array}{ll}\widetilde{\mathbf{V}}_{\min }^{T} & \widetilde{\mathbf{I}}_{\min }^{T}\end{array}\right]^{T}-\mathbf{L}_{\widetilde{\mathbf{x}}}^{-} \widetilde{\mathbf{R}}_{\max }-\mathbf{L}_{\widetilde{\mathbf{x}}}^{+} \widetilde{\mathbf{R}}_{\min }$ and $\mathbf{b}_{\max }=\left[\begin{array}{ll}\widetilde{\mathbf{V}}_{\max }^{T} & \widetilde{\mathbf{I}}_{\max }^{T}\end{array}\right]^{T}-\mathbf{L}_{\widetilde{\mathbf{x}}}^{+} \widetilde{\mathbf{R}}_{\max }-\mathbf{L}_{\widetilde{\mathbf{x}}}^{-} \widetilde{\mathbf{R}}_{\min }$.

If we give $\left(\widetilde{\mathbf{u}}_{\mathrm{PCC}}, \widetilde{\mathbf{u}}\right)$ in $\mathbf{\widetilde{\Omega}}$, the Brouwers’s fixed point theorem can be used to guarantee the solvability of the fixed-point representation (\ref{F8}), i.e., the power flow equation (\ref{C1}) can be satisfied. Furthermore, since $\widetilde{\mathbf{X}}$ in the Brouwers’s fixed point theorem is a subset of the safety limits (\ref{C2})-(\ref{C3}), the solution in (\ref{C1}) from the  Brouwer’s fixed point theorem also satisfies (\ref{C2})-(\ref{C3}). If we further require an additional bound $\mathbf{u}_{\min }-\mathbf{u}_{0} \leqslant \widetilde{\mathbf{u}} \leqslant \mathbf{u}_{\max }-\mathbf{u}_{0}$, the constraint (\ref{C4}) is also satisfied. Consequently, $\mathbf{\widetilde{\Omega}}$ intersecting with $\mathbf{u}_{\min }-\mathbf{u}_{0} \leqslant \widetilde{\mathbf{u}} \leqslant \mathbf{u}_{\max }-\mathbf{u}_{0}$ guarantees AC-feasibility in the domain of $\left(\widetilde{\mathbf{u}}_{\mathrm{PCC}}, \widetilde{\mathbf{u}}\right)$. If we project it onto the space of $\widetilde{\mathbf{u}}_{\mathrm{PCC}}$, the resulting projection is obviously a sub-region of the AC-feasible power transfer region in the domain of $\widetilde{\mathbf{u}}_{\mathrm{PCC}}$. Such a projection can be easily implemented by the current methods \cite{6}-\cite{12} which work for linear models. The obtained sub-region in the domain of $\widetilde{\mathbf{u}}_{\mathrm{PCC}}$ can be easily transformed to $\mathbf{u}_{\mathrm{PCC}}$ by $\widetilde{\mathbf{u}}_{\mathrm{PCC}}=\left(\mathbf{u}_\mathrm{P C C}-\mathbf{u}_{\mathrm{PCC} 0}\right)$, , since all the constraints in the sub-region are linear.

Consequently, the key to construct our linear constraints lies in calculating $\left(\widetilde{\mathbf{V}}_{\min }, \widetilde{\mathbf{V}}_{\max }, \widetilde{\mathbf{I}}_{\min }, \widetilde{\mathbf{I}}_{\max }\right)$ and $\left(\widetilde{\mathbf{R}}_{\min }, \widetilde{\mathbf{R}}_{\max }\right)$ that satisfy the constraint (\ref{P1-1}). Recall that $\widetilde{\mathbf{R}}_{\text {min }}$ and $\widetilde{\mathbf{R}}_{\text {max }}$ are lower and upper bounds of the second-order term $\widetilde{\mathbf{R}}$ (\ref{F7}) when $\left(\mathbf{P}_{\mathrm{L}}, \mathbf{Q}_{\mathrm{L}}, \mathbf{V}, \mathbf{I}\right)$ varies over $\widetilde{\mathbf{X}}$ defined in (\ref{X1}). $\widetilde{\mathbf{R}}$ is composed of three terms: two quadratic terms $-\widetilde{\mathbf{P}}_{\mathrm{L}} * \widetilde{\mathbf{P}}_{\mathrm{L}}$ and $-\widetilde{\mathbf{Q}}_{\mathrm{L}} * \widetilde{\mathbf{Q}}_{\mathrm{L}}$, and one bilinear term $\mathbf{e}_{\mathrm{V} \mathrm{I}} \tilde{\mathbf{V}} * \tilde{\mathbf{I}}$. Note that $\left(\widetilde{\mathbf{P}}_{\mathrm{L}}, \widetilde{\mathbf{Q}}_{\mathrm{L}}, \widetilde{\mathbf{V}}, \tilde{\mathbf{I}}\right)$ is decoupled in $\widetilde{\mathbf{X}}$ defined in (\ref{X1}). Consequently, $\widetilde{\mathbf{R}}_{\max }$ and $\widetilde{\mathbf{R}}_{\min }$ can be decomposed into three terms as
\begin{equation}\label{Rmax1}
\widetilde{\mathbf{R}}_{\max }=\widetilde{\mathbf{R}}_{\max }^{(\mathrm{P})}+\widetilde{\mathbf{R}}_{\max }^{(\mathrm{Q})}+\widetilde{\mathbf{R}}_{\max }^{(\mathrm{VI})},
\end{equation}
\begin{equation}\label{Rmin1}
\widetilde{\mathbf{R}}_{\min }=\widetilde{\mathbf{R}}_{\min }^{(\mathrm{P})}+\widetilde{\mathbf{R}}_{\min }^{(\mathrm{Q})}+\widetilde{\mathbf{R}}_{\min }^{(\mathrm{VI})},
\end{equation}
where $\left(\widetilde{\mathbf{R}}_{\max }^{(\mathrm{P}}, \widetilde{\mathbf{R}}_{\max }^{(\mathrm{Q})}, \widetilde{\mathbf{R}}_{\max }^{(\mathrm{VI})}, \widetilde{\mathbf{R}}_{\min }^{(\mathrm{P})}, \widetilde{\mathbf{R}}_{\min }^{(\mathrm{Q})}, \widetilde{\mathbf{R}}_{\min }^{(\mathrm{VI})}\right)$ can be formulated based on their monotonicity when $\left(\widetilde{\mathbf{P}}_{\mathrm{L}}, \widetilde{\mathbf{Q}}_{\mathrm{L}}, \widetilde{\mathbf{V}}, \tilde{\mathbf{I}}\right)$ varies over $\widetilde{\mathbf{X}}$, as described below.
\begin{equation}\label{RmaxP}
\left\{\begin{array}{l}
\widetilde{\mathbf{R}}_{\max }^{(\mathrm{P})}=\mathbf{0}, \text { if } \widetilde{\mathbf{P}}_{\mathrm{Lmin}} \leq \mathbf{0} \leq \widetilde{\mathbf{P}}_{\mathrm{L} \text { max }} \\
\widetilde{\mathbf{R}}_{\text {max }}^{(\mathrm{P})}=-\widetilde{\mathbf{P}}_{\mathrm{L} \text { min }} * \widetilde{\mathbf{P}}_{\mathrm{L} \text { min }}, \text { if } \mathbf{0} \leq \widetilde{\mathbf{P}}_{\mathrm{Lmin}} \leq \widetilde{\mathbf{P}}_{\mathrm{L} \text { max }} \\
\widetilde{\mathbf{R}}_{\text {max }}^{(\mathrm{P})}=-\widetilde{\mathbf{P}}_{\mathrm{L} \text { max }} * \widetilde{\mathbf{P}}_{\mathrm{L} \text { max }}, \text { if } \widetilde{\mathbf{P}}_{\mathrm{Lmin}} \leq \widetilde{\mathbf{P}}_{\mathrm{L} \text { max }} \leq \mathbf{0}
\end{array}\right.,
\end{equation}
\begin{equation}\label{RmaxQ}
\left\{\begin{array}{l}
\widetilde{\mathbf{R}}_{\max }^{(\mathrm{Q})}=\mathbf{0}, \text { if } \widetilde{\mathbf{Q}}_{\mathrm{Lmin}} \leq \mathbf{0} \leq \widetilde{\mathbf{Q}}_{\mathrm{L} \max } \\
\widetilde{\mathbf{R}}_{\max }^{(\mathrm{Q})}=-\widetilde{\mathbf{Q}}_{\mathrm{L} \min } * \widetilde{\mathbf{Q}}_{\mathrm{L} \text { min }}, \text { if } \mathbf{0} \leq \widetilde{\mathbf{Q}}_{\mathrm{Lmin}} \leq \widetilde{\mathbf{Q}}_{\mathrm{L} \max } \\
\widetilde{\mathbf{R}}_{\max }^{(\mathrm{Q})}=-\widetilde{\mathbf{Q}}_{\mathrm{L} \max } * \widetilde{\mathbf{Q}}_{\mathrm{L} \max }, \text { if } \widetilde{\mathbf{Q}}_{\mathrm{Lmin}} \leq \widetilde{\mathbf{Q}}_{\mathrm{L} \max } \leq \mathbf{0}
\end{array}\right.,
\end{equation}
\begin{equation}\label{RminP}
\left\{\begin{array}{l}
\widetilde{\mathbf{R}}_{\min }^{(\mathrm{P})}=\left\{-\widetilde{\mathbf{P}}_{\mathrm{Lmin}} * \widetilde{\mathbf{P}}_{\mathrm{Lmin}},-\widetilde{\mathbf{P}}_{\mathrm{L} \max } * \widetilde{\mathbf{P}}_{\mathrm{L} \max }\right\}^{\min }, \\
\text { if } \widetilde{\mathbf{P}}_{\mathrm{Lmin}} \leq \mathbf{0} \leq \widetilde{\mathbf{P}}_{\mathrm{L} \max } \\
\widetilde{\mathbf{R}}_{\min} ^{(\mathrm{P})}=-\widetilde{\mathbf{P}}_{\mathrm{L} \max } * \widetilde{\mathbf{P}}_{\mathrm{L} \max }, \text { if } \mathbf{0} \leq \widetilde{\mathbf{P}}_{\mathrm{Lmin}} \leq \widetilde{\mathbf{P}}_{\mathrm{Lmax}} \\
\widetilde{\mathbf{R}}_{\mathrm{min}}^{(\mathrm{P})}=-\widetilde{\mathbf{P}}_{\mathrm{Lmin}} * \widetilde{\mathbf{P}}_{\mathrm{Lmin}}, \text { if } \widetilde{\mathbf{P}}_{\mathrm{Lmin}} \leq \widetilde{\mathbf{P}}_{\mathrm{L} \max } \leq \mathbf{0}
\end{array}\right.,
\end{equation}
\begin{equation}\label{RminQ}
\left\{\begin{array}{l}
\widetilde{\mathbf{R}}_{\text {min }}^{(\mathrm{Q})}=\left\{-\widetilde{\mathbf{Q}}_{\mathrm{Lmin}} * \widetilde{\mathbf{Q}}_{\mathrm{Lmin}},-\widetilde{\mathbf{Q}}_{\mathrm{Lmax}} * \widetilde{\mathbf{Q}}_{\mathrm{Lmax}}\right\}^{\min}, \\
\text { if } \widetilde{\mathbf{Q}}_{\mathrm{Lmin}} \leq \mathbf{0} \leq \widetilde{\mathbf{Q}}_{\mathrm{Lmax}} \\
\widetilde{\mathbf{R}}_{\min }^{(\mathrm{Q})}=-\widetilde{\mathbf{Q}}_{\mathrm{Lmax}} * \widetilde{\mathbf{Q}}_{\mathrm{Lmax}}, \text { if } \mathbf{0} \leq \widetilde{\mathbf{Q}}_{\mathrm{Lmin}} \leq \widetilde{\mathbf{Q}}_{\mathrm{Lmax}} \\
\widetilde{\mathbf{R}}_{\min }^{(\mathrm{Q})}=-\widetilde{\mathbf{Q}}_{\mathrm{Lmin}} * \widetilde{\mathbf{Q}}_{\mathrm{Lmin}}, \text { if } \widetilde{\mathbf{Q}}_{\mathrm{Lmin}} \leq \widetilde{\mathbf{Q}}_{\mathrm{Lmax}} \leq \mathbf{0}
\end{array}\right.,
\end{equation}
\begin{equation}\label{RmaxVI}
\widetilde{\mathbf{R}}_{\max }^{(\mathrm{VI})}=\left\{\begin{array}{l}
\left(\mathbf{e}_{\mathrm{VI}} \widetilde{\mathbf{V}}_{\min }\right) * \widetilde{\mathbf{I}}_{\min },\left(\mathbf{e}_{\mathrm{VI}} \widetilde{\mathbf{V}}_{\min }\right) * \widetilde{\mathbf{I}}_{\max } \\
\left(\mathbf{e}_{\mathrm{VI}} \widetilde{\mathbf{V}}_{\max }\right) * \widetilde{\mathbf{I}}_{\min },\left(\mathbf{e}_{\mathrm{VI}} \widetilde{\mathbf{V}}_{\max }\right) * \widetilde{\mathbf{I}}_{\max }
\end{array}\right\}^{\max },
\end{equation}
\begin{equation}\label{RminVI}
\widetilde{\mathbf{R}}_{\min }^{(\mathrm{VI})}=\left\{\begin{array}{l}
\left(\mathbf{e}_{\mathrm{VI}} \widetilde{\mathbf{V}}_{\min }\right) * \widetilde{\mathbf{I}}_{\min },\left(\mathbf{e}_{\mathrm{VI}} \widetilde{\mathbf{V}}_{\min }\right) * \widetilde{\mathbf{I}}_{\max } \\
\left(\mathbf{e}_{\mathrm{VI}} \widetilde{\mathbf{V}}_{\max }\right) * \widetilde{\mathbf{I}}_{\min },\left(\mathbf{e}_{\mathrm{VI}} \widetilde{\mathbf{V}}_{\max }\right) * \widetilde{\mathbf{I}}_{\max }
\end{array}\right\}^{\min },
\end{equation}

\noindent where $\widetilde{\mathbf{P}}_{\text {Lmin }}$ and $\widetilde{\mathbf{P}}_{\text {Lmax }}$ are lower and upper bounds of $\widetilde{\mathbf{P}}_{\mathrm{L}}$; $\widetilde{\mathbf{Q}}_{\mathrm{Lmin}}$ and $\widetilde{\mathbf{Q}}_{\mathrm{Lmax}}$ are lower and upper bounds of $\widetilde{\mathbf{Q}}_{\mathrm{L}}$; the operator  $``{\max }"$ (resp. $``{\min }"$) selects the maximum (resp. minimum) element.

Substituting (\ref{Rmax1})-(\ref{RminVI}) into (\ref{Om1}) returns a region $\widetilde{\mathbf{\Omega}}$ of $\left(\widetilde{\mathbf{u}}_{\mathrm{PCC}}, \widetilde{\mathbf{u}}\right)$, in which the condition (\ref{P1-1}) for AC-feasibility can be satisfied. In this paper, we try to expand $\widetilde{\mathbf{\Omega}}$ as large as possible. For that purpose, we firstly quantify how large the polytope $\widetilde{\mathbf{\Omega}}$ is. Based on (\ref{P1-1}), $\widetilde{\mathbf{\Omega}}$ is shaped by parallel hyperplanes in the domain of $\left(\widetilde{\mathbf{u}}_{\mathrm{PCC}}, \widetilde{\mathbf{u}}\right)$. Consequently, the sum of distances between these hyperplanes across all the dimensions is taken as an index to reflect how large the polytope $\widetilde{\mathbf{\Omega}}$ is. Such a sum can be calculated as (\ref{OBJ1}) shown at the bottom of this page.
\begin{figure*}[hb]
\hrulefill
\begin{equation}\label{OBJ1}
\begin{aligned}
&-\mathbf{e}_{1}^{T}\left\{\left\{\left(\left[\begin{array}{cc}
\widetilde{\mathbf{V}}_{\min }^{T} & \widetilde{\mathbf{I}}_{\min }^{T}
\end{array}\right]^{T}-\mathbf{L}_{\widetilde{\mathbf{x}}}^{+} \widetilde{\mathbf{R}}_{\min }-\mathbf{L}_{\widetilde{\mathbf{x}}}^{-} \widetilde{\mathbf{R}}_{\max }\right)-\left(\left[\begin{array}{cc}
\widetilde{\mathbf{V}}_{\max }^{T} & \widetilde{\mathbf{I}}_{\max }^{T}
\end{array}\right]^{T}-\mathbf{L}_{\widetilde{\mathbf{x}}}^{-} \widetilde{\mathbf{R}}_{\min }-\mathbf{L}_{\widetilde{\mathbf{x}}}^{+} \widetilde{\mathbf{R}}_{\max }\right)\right\} \cdot / \mathbf{H}\right\} \\
&\Rightarrow-\mathbf{e}_{1}^{T}\left\{\left\{\left[\begin{array}{cc}
\widetilde{\mathbf{V}}_{\min }^{T} & \widetilde{\mathbf{I}}_{\min }^{T}
\end{array}\right]^{T}-\left[\begin{array}{cc}
\widetilde{\mathbf{V}}_{\max }^{T} & \widetilde{\mathbf{I}}_{\max }^{T}
\end{array}\right]^{T}+\mathbf{M}_{\widetilde{\mathbf{x}}}^{+} \widetilde{\mathbf{R}}_{\max }-\mathbf{M}_{\widetilde{\mathbf{x}}}^{+} \widetilde{\mathbf{R}}_{\min }\right\} \cdot / \mathbf{H}\right\}
\end{aligned},
\end{equation}
where $\mathbf{e}_{1} \in \mathbf{1}^{\left(n_{\mathrm{L}}+n_{\mathrm{N}}\right) \times 1}$; the operator “./” indicates the component-wise division; $\mathbf{H} \in \mathbb{R}_{+}^{\left(n_{L}+n_{N}\right) \times 1}$ whose $i^{t h}$ element is the root-sum-squares of the $i^{t h}$ row of $\mathbf{L}_{\mathrm{u}}=\left[\begin{array}{ll}\mathbf{A}_{1} \mathbf{F}_{\widetilde{\mathbf{u}}_\mathrm{P C C}} & \mathbf{A}_{1} \mathbf{F}_{\widetilde{\mathbf{u}}}\end{array}\right]$.
\end{figure*}

Based on the rationale above, we formulate the following optimization problem to expand $\widetilde{\mathbf{\Omega}}$ as large as possible.

\noindent \textbf{OP:}
\begin{small}
\begin{equation}\label{OP-OBJ}
\min \mathbf{e}_{1}^{T}\left\{\left\{\begin{array}{l}
{\left[\begin{array}{ll}
\widetilde{\mathbf{V}}_{\min }^{T} & \widetilde{\mathbf{I}}_{\min }^{T}
\end{array}\right]^{T}-\left[\begin{array}{ll}
\widetilde{\mathbf{V}}_{\max }^{T} & \widetilde{\mathbf{I}}_{\max }^{T}
\end{array}\right]^{T}} \\
+\mathbf{M}_{\widetilde{\mathbf{x}}}^{+}\left(\widetilde{\mathbf{R}}_{\max }^{(\mathrm{P})}+\widetilde{\mathbf{R}}_{\max }^{(\mathrm{Q})}+\widetilde{\mathbf{R}}_{\max }^{(\mathrm{VI})}\right) \\
-\mathbf{M}_{\widetilde{\mathbf{x}}}^{+}\left(\widetilde{\mathbf{R}}_{\min }^{(\mathrm{P})}+\widetilde{\mathbf{R}}_{\min }^{(\mathrm{Q})}+\widetilde{\mathbf{R}}_{\min }^{(\mathrm{VI})}\right)
\end{array}\right\} . / \mathbf{H}\right\},
\end{equation}
\begin{equation}\label{OP-Cons}
\text{s.t. Constraints (\ref{X2})-(\ref{P1-1}) and (\ref{RmaxP})-(\ref{RminVI})} ,
\end{equation}
\end{small}
\noindent over $\widetilde{\mathbf{V}}_{\min }$, $\widetilde{\mathbf{V}}_{\max }$, $\widetilde{\mathbf{I}}_{\min }$, $\widetilde{\mathbf{I}}_{\max }$, $\widetilde{\mathbf{P}}_{\text {Lmin }}$, $\widetilde{\mathbf{P}}_{\text {Lmax }}$, $\widetilde{\mathbf{Q}}_{\mathrm{Lmin}}$, $\widetilde{\mathbf{Q}}_{\mathrm{Lmax}}$, $\widetilde{\mathbf{R}}_{\min }^{(\mathrm{P})}$, $\widetilde{\mathbf{R}}_{\max }^{(\mathrm{P})}$, $\widetilde{\mathbf{R}}_{\min }^{(\mathrm{Q})}$, $\widetilde{\mathbf{R}}_{\max }^{(\mathrm{Q})}$, $\widetilde{\mathbf{R}}_{\min }^{(\mathrm{VI})}$, $\widetilde{\mathbf{R}}_{\max }^{(\mathrm{VI})}$. Note that the objective function is to minimize the negative sum of distances in (\ref{OBJ1}), i.e., to maximize the sum of distances. Also, $\left(\widetilde{\mathbf{R}}_{\min }, \widetilde{\mathbf{R}}_{\max }\right)$ is eliminated by (\ref{Rmax1})-(\ref{Rmin1}).

The operators “\{\}$_{\max }$” and “\{\}$_{\min }$” in the constraint (\ref{OP-Cons}) can be removed by additionally introducing binary variables, as done in certain commercial solvers (e.g., GUROBI). Also, the explicit formulation of $\left(\widetilde{\mathbf{R}}_{\max }^{(\mathrm{P})}, \widetilde{\mathbf{R}}_{\min }^{(\mathrm{P})}\right)$ requires the binary variables to indicate the rank among $\mathbf{0}, \widetilde{\mathbf{P}}_{\mathrm{Lmin}}$, and $\widetilde{\mathbf{P}}_{\mathrm{Lmax}}$. A similar situation occurs for the explicit formulation of $\left(\widetilde{\mathbf{R}}_{\max }^{(\mathrm{Q})}, \widetilde{\mathbf{R}}_{\min }^{(\mathrm{Q})}\right)$. These issues cast the OP problem as a MINLP which is difficult to solve. To avoid binary variables, we develop the following NLP to solve part of the variables in the OP problem (how to obtain the rest of the variables will be explained later in Proposition 2):

\noindent \textbf{EP:}
\begin{small}
\begin{equation}\label{EP-OBJ}
\min \mathbf{e}_{1}^{T}\left\{\left\{\begin{array}{l}
{\left[\begin{array}{ll}
\widetilde{\mathbf{V}}_{\min }^{T} & \widetilde{\mathbf{I}}_{\min }^{T}
\end{array}\right]^{T}-\left[\begin{array}{ll}
\widetilde{\mathbf{V}}_{\max }^{T} & \widetilde{\mathbf{I}}_{\max }^{T}
\end{array}\right]^{T}} \\
+\mathbf{M}_{\widetilde{\mathbf{x}}}^{+}\widetilde{\mathbf{R}}_{\max }^{(\mathrm{VI})}-\mathbf{M}_{\widetilde{\mathbf{x}}}^{+}\widetilde{\mathbf{R}}_{\min }^{(\mathrm{VI})}
\end{array}\right\} . / \mathbf{H}\right\},
\end{equation}
\end{small}
\begin{equation}\label{EP-C1}
\text{s.t.} \left\{\begin{array}{l}
\mathbf{V}_{\min }-\mathbf{V}_{0} \leq \widetilde{\mathbf{V}}_{\min } \leq \widetilde{\mathbf{V}}_{\max } \leq \mathbf{V}_{\max }-\mathbf{V}_{0} \\
\mathbf{I}_{\min }-\mathbf{I}_{0} \leq \widetilde{\mathbf{I}}_{\min } \leq \widetilde{\mathbf{I}}_{\max } \leq \mathbf{I}_{\max }-\mathbf{I}_{0}
\end{array}\right.,
\end{equation}
\begin{small}
\begin{equation}\label{EP-C2}
\left\{\begin{array}{l}
\widetilde{\mathbf{R}}_{\min }^{(\mathrm{VI})} \leq\left(\mathbf{e}_{\mathrm{VI}} \widetilde{\mathbf{V}}_{\min }\right) * \widetilde{\mathbf{I}}_{\min }, \widetilde{\mathbf{R}}_{\min }^{(\mathrm{VI})} \leq\left(\mathbf{e}_{\mathrm{VI}} \widetilde{\mathbf{V}}_{\max }\right) * \widetilde{\mathbf{I}}_{\min } \\
\widetilde{\mathbf{R}}_{\min }^{(\mathrm{VI})} \leq\left(\mathbf{e}_{\mathrm{VI}} \widetilde{\mathbf{V}}_{\min }\right) * \widetilde{\mathbf{I}}_{\max }, \widetilde{\mathbf{R}}_{\min }^{(\mathrm{VI})} \leq\left(\mathbf{e}_{\mathrm{VI}} \widetilde{\mathbf{V}}_{\max }\right) * \widetilde{\mathbf{I}}_{\max }
\end{array}\right.,
\end{equation}
\end{small}
\begin{small}
\begin{equation}\label{EP-C3}
\left\{\begin{array}{l}
\widetilde{\mathbf{R}}_{\max }^{(\mathrm{VI})} \geq\left(\mathbf{e}_{\mathrm{VI}} \widetilde{\mathbf{V}}_{\min }\right) * \widetilde{\mathbf{I}}_{\min }, \widetilde{\mathbf{R}}_{\max }^{(\mathrm{VI})} \geq\left(\mathbf{e}_{\mathrm{VI}} \widetilde{\mathbf{V}}_{\max }\right) * \widetilde{\mathbf{I}}_{\min } \\
\widetilde{\mathbf{R}}_{\max }^{(\mathrm{VI})} \geq\left(\mathbf{e}_{\mathrm{VI}} \widetilde{\mathbf{V}}_{\min }\right) * \widetilde{\mathbf{I}}_{\max }, \widetilde{\mathbf{R}}_{\max }^{(\mathrm{VI})} \geq\left(\mathbf{e}_{\mathrm{VI}} \widetilde{\mathbf{V}}_{\max }\right) * \widetilde{\mathbf{I}}_{\max }
\end{array}\right.,
\end{equation}
\end{small}
\noindent over $\widetilde{\mathbf{V}}_{\min }, \widetilde{\mathbf{V}}_{\max }, \widetilde{\mathbf{I}}_{\min }, \widetilde{\mathbf{I}}_{\max }, \widetilde{\mathbf{R}}_{\min }^{(\mathrm{VI})}, \widetilde{\mathbf{R}}_{\max }^{(\mathrm{VI})}$. Let $\widetilde{\mathbf{x}}_{\mathrm{VI}}^{*}=\left(\widetilde{\mathbf{V}}_{\min }^{*}, \widetilde{\mathbf{V}}_{\max }^{*}, \widetilde{\mathbf{I}}_{\min }^{*}, \widetilde{\mathbf{I}}_{\max }^{*}, \widetilde{\mathbf{R}}_{\min }^{(\mathrm{VI})^{*}}, \widetilde{\mathbf{R}}_{\max }^{(\mathrm{VI}) *}\right)$ denote the optimal solution of the EP problem.

\begin{prop} 
The variables of the OP problem can be divided into two parts:
\begin{equation}\label{Solution}
\left\{\begin{array}{l}
\widetilde{\mathbf{x}}_{\mathrm{VI}}=\left(\widetilde{\mathbf{V}}_{\min }, \widetilde{\mathbf{V}}_{\max }, \widetilde{\mathbf{I}}_{\min }, \widetilde{\mathbf{I}}_{\max }, \widetilde{\mathbf{R}}_{\min }^{(\mathrm{VI})}, \widetilde{\mathbf{R}}_{\max }^{(\mathrm{VI})}\right) \\
\widetilde{\mathbf{x}}_{\mathrm {Line }}=\left(\begin{array}{c}
\widetilde{\mathbf{P}}_{\operatorname{Lmin}}, \widetilde{\mathbf{P}}_{\mathrm{Lmax}}, \widetilde{\mathbf{Q}}_{\mathrm{Lmin}}, \widetilde{\mathbf{Q}}_{\mathrm{Lmax}}, \\
\widetilde{\mathbf{R}}_{\min }^{(\mathrm{P})}, \widetilde{\mathbf{R}}_{\max }^{(\mathrm{P})}, \widetilde{\mathbf{R}}_{\min }^{(\mathrm{Q})}, \widetilde{\mathbf{R}}_{\max }^{(\mathrm{Q})}
\end{array}\right)
\end{array}\right..
\end{equation}

If 1) $\widetilde{\mathbf{x}}_{\mathrm{VI}}$ in the OP problem equals $\widetilde{\mathbf{x}}_{\mathrm{VI}}^{*}$, and 2) $\widetilde{\mathbf{x}}_{\mathrm{Line}}$ in the OP problem equals zero, such a point is a solution of the OP problem.
\end{prop}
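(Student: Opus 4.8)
The plan is to show two things: with $\widetilde{\mathbf{x}}_{\mathrm{Line}}$ fixed at zero the OP problem collapses exactly onto the EP problem, so the candidate point is OP-feasible and attains the EP optimal value; and no OP-feasible point can beat that value, so the candidate is globally optimal.

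First I would exploit the block structure induced by the split (\ref{Solution}). Setting $\widetilde{\mathbf{x}}_{\mathrm{Line}}=\mathbf{0}$ forces $\widetilde{\mathbf{P}}_{\mathrm{Lmin}}=\widetilde{\mathbf{P}}_{\mathrm{Lmax}}=\widetilde{\mathbf{Q}}_{\mathrm{Lmin}}=\widetilde{\mathbf{Q}}_{\mathrm{Lmax}}=\mathbf{0}$, so the first case of each of (\ref{RmaxP}), (\ref{RmaxQ}), (\ref{RminP}), (\ref{RminQ}) applies and gives $\widetilde{\mathbf{R}}_{\max}^{(\mathrm{P})}=\widetilde{\mathbf{R}}_{\min}^{(\mathrm{P})}=\widetilde{\mathbf{R}}_{\max}^{(\mathrm{Q})}=\widetilde{\mathbf{R}}_{\min}^{(\mathrm{Q})}=\mathbf{0}$. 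Hence the P- and Q-terms vanish from the decompositions (\ref{Rmax1})-(\ref{Rmin1}) and from the objective (\ref{OP-OBJ}), which thereby reduces verbatim to the EP objective (\ref{EP-OBJ}). I would then verify each OP constraint at the candidate: (\ref{X2}) holds because its $\widetilde{\mathbf{V}},\widetilde{\mathbf{I}}$ rows are exactly (\ref{EP-C1}) and its $\widetilde{\mathbf{P}}_{\mathrm{L}},\widetilde{\mathbf{Q}}_{\mathrm{L}}$ rows read $\mathbf{0}\le\mathbf{0}$; and (\ref{RmaxVI})-(\ref{RminVI}) hold because, with $\mathbf{M}_{\widetilde{\mathbf{x}}}^{+}\ge\mathbf{0}$, minimizing (\ref{EP-OBJ}) drives each entry of $\widetilde{\mathbf{R}}_{\max}^{(\mathrm{VI})}$ down to the largest of the four products permitted by (\ref{EP-C3}) and each entry of $\widetilde{\mathbf{R}}_{\min}^{(\mathrm{VI})}$ up to the smallest permitted by (\ref{EP-C2}), so these relaxed constraints become active and reproduce the component-wise maxima and minima in (\ref{RmaxVI})-(\ref{RminVI}).

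For optimality I would rewrite the integrand of (\ref{OP-OBJ}) as $\mathbf{b}_{\min}-\mathbf{b}_{\max}$ (the quantities defined after (\ref{Om1})) and split it, via (\ref{Rmax1})-(\ref{Rmin1}), into the VI part, which is precisely the EP integrand, plus $\mathbf{M}_{\widetilde{\mathbf{x}}}^{+}\bigl((\widetilde{\mathbf{R}}_{\max}^{(\mathrm{P})}-\widetilde{\mathbf{R}}_{\min}^{(\mathrm{P})})+(\widetilde{\mathbf{R}}_{\max}^{(\mathrm{Q})}-\widetilde{\mathbf{R}}_{\min}^{(\mathrm{Q})})\bigr)$. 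Since $\widetilde{\mathbf{R}}_{\max}^{(\mathrm{P})}$ and $\widetilde{\mathbf{R}}_{\min}^{(\mathrm{P})}$ are the interval maximum and minimum of the same function $-\widetilde{\mathbf{P}}_{\mathrm{L}}*\widetilde{\mathbf{P}}_{\mathrm{L}}$, we have $\widetilde{\mathbf{R}}_{\max}^{(\mathrm{P})}\ge\widetilde{\mathbf{R}}_{\min}^{(\mathrm{P})}$, and likewise for Q; together with $\mathbf{M}_{\widetilde{\mathbf{x}}}^{+}\ge\mathbf{0}$, $\mathbf{e}_{1}\ge\mathbf{0}$ and $\mathbf{H}>\mathbf{0}$, the P,Q contribution to (\ref{OP-OBJ}) is nonnegative at every OP-feasible point. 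Thus the OP objective of any feasible point is at least its VI part; and since that point's $\widetilde{\mathbf{x}}_{\mathrm{VI}}$ block is itself EP-feasible (it obeys (\ref{EP-C1}) and, being exact maxima/minima, obeys (\ref{EP-C2})-(\ref{EP-C3})), its VI part is at least the EP optimal value. Hence the EP optimal value lower-bounds OP; as the candidate attains it, the candidate is optimal.

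The step I expect to be the main obstacle is feasibility of (\ref{P1-1}) at the candidate. With $\widetilde{\mathbf{x}}_{\mathrm{Line}}=\mathbf{0}$, (\ref{P1-1}) reduces to requiring the VI integrand to be component-wise nonpositive, i.e.\ $\mathbf{b}_{\min}\le\mathbf{b}_{\max}$ at $\widetilde{\mathbf{x}}_{\mathrm{VI}}^{*}$; yet (\ref{P1-1}) is not imposed in EP, whose objective controls only the \emph{sum} of these entries (weighted by $\mathbf{e}_{1}$ and divided component-wise by $\mathbf{H}$). To close this gap I would show each entry is already nonpositive at the EP optimum: a positive entry would make the corresponding slab in (\ref{Om1}) empty, and I would argue this is never optimal by contracting the offending $\widetilde{\mathbf{V}}$ or $\widetilde{\mathbf{I}}$ bound toward $\mathbf{0}$, which shrinks the bilinear ranges in (\ref{RmaxVI})-(\ref{RminVI}) and cannot raise (\ref{EP-OBJ}). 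The delicacy is that $\mathbf{M}_{\widetilde{\mathbf{x}}}^{+}$ and the incidence $\mathbf{e}_{\mathrm{VI}}$ couple the coordinates, so the contraction must be shown to have nonpositive net effect across all coupled rows; I would control this by using the homogeneity of the two blocks under a common scaling of the $\widetilde{\mathbf{V}},\widetilde{\mathbf{I}}$ bounds (the affine VI term scales linearly, the bilinear term quadratically) to locate the optimal scale and confirm component-wise nonpositivity there.
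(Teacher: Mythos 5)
You split the variables into the VI and Line blocks, reduce the Line block to zero, and bound the OP objective of any feasible point below by the EP optimal value. These steps are correct and coincide with the paper's own route: the paper relaxes OP by deleting (\ref{P1-1}), decomposes the relaxation into a Line-block problem whose optimum is zero (your nonnegativity of the P/Q terms) and a VI-block problem shown equivalent to EP (your ``collapse'' step, which implicitly uses the paper's Lemma 1 on the activeness of (\ref{EP-C2})--(\ref{EP-C3})). Your observation that (\ref{P1-1}) is equivalent to $\mathbf{b}_{\min}\le\mathbf{b}_{\max}$ and is the only nontrivial feasibility requirement at the candidate is also correct.

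The gap is exactly the step you flag as the main obstacle, and your proposed fix does not close it. A common scaling $t\mapsto t\,(\widetilde{\mathbf{V}}_{\min},\widetilde{\mathbf{V}}_{\max},\widetilde{\mathbf{I}}_{\min},\widetilde{\mathbf{I}}_{\max})$ only controls the objective (\ref{EP-OBJ}), which is the $\mathbf{e}_{1}$-weighted, $\mathbf{H}$-scaled \emph{sum} of the rows of $\mathbf{b}_{\min}-\mathbf{b}_{\max}$; optimality along that ray yields aggregate nonpositivity, never the component-wise inequality that (\ref{P1-1}) requires. Moreover, your assertion that contracting an offending $\widetilde{\mathbf{V}}$ or $\widetilde{\mathbf{I}}$ bound toward zero ``cannot raise (\ref{EP-OBJ})'' is itself the inequality in dispute: contraction makes the negative affine differences $\widetilde{\mathbf{V}}_{\min}-\widetilde{\mathbf{V}}_{\max}$ and $\widetilde{\mathbf{I}}_{\min}-\widetilde{\mathbf{I}}_{\max}$ larger while shrinking $\mathbf{M}_{\widetilde{\mathbf{x}}}^{+}(\widetilde{\mathbf{R}}_{\max}^{(\mathrm{VI})}-\widetilde{\mathbf{R}}_{\min}^{(\mathrm{VI})})$, and which effect dominates is precisely what must be proved, so the argument is circular as stated. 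The missing idea is the per-branch decoupling of Remark 1: under the assigned reference directions, EP separates into independent EP-\textit{i} problems, and \emph{inside a single branch} the coupling through $\mathbf{M}_{\widetilde{\mathbf{x}}}^{+}$ and $\mathbf{e}_{\mathrm{VI}}$ that blocks your contraction disappears. The paper then recasts feasibility of (\ref{P1-1}) as a slack-variable LP, dualizes it, notes the dual maximum is attained at a vertex $\lambda\in\{0,1\}^{n_{\mathrm{L}}+n_{\mathrm{N}}}$ of the box, splits each vertex objective across branches, and applies Lemma 2: fixing one of the $\widetilde{V}/\widetilde{I}$ blocks of branch $i$ at its EP-\textit{i} optimum and replacing the other block together with $\widetilde{R}_{i}^{(\mathrm{VI})}$ by zero is feasible, so optimality of EP-\textit{i} forces $\lambda_{V}(\widetilde{V}_{\min,i}^{*}-\widetilde{V}_{\max,i}^{*})+\lambda_{I}(\widetilde{I}_{\min,i}^{*}-\widetilde{I}_{\max,i}^{*})+(\lambda_{V}\mathrm{M}_{\widetilde{V},i}^{+}+\lambda_{I}\mathrm{M}_{\widetilde{I},i}^{+})(\widetilde{R}_{\max,i}^{(\mathrm{VI})*}-\widetilde{R}_{\min,i}^{(\mathrm{VI})*})\le 0$ for every binary $(\lambda_{V},\lambda_{I})$. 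That localized exchange argument --- your ``zero is feasible'' instinct, but applied branch-by-branch where it is actually valid --- is what converts aggregate control into the row-wise statement; without it, or a substitute for the decoupling, your sketch does not go through.
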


The proof of Proposition 2 is given in Appendix A. Once the EP problem is solved, $\widetilde{\mathbf{\Omega}}$ in the domain of $\left(\widetilde{\mathbf{u}}_{\mathrm{PCC}}, \widetilde{\mathbf{u}}\right)$ can be obtained. Furthermore, $\widetilde{\mathbf{\Omega}}$ intersecting with $\mathbf{u}_{\min }-\mathbf{u}_{0} \leqslant \widetilde{\mathbf{u}} \leqslant \mathbf{u}_{\max }-\mathbf{u}_{0}$ will be projected onto the space of $\widetilde{\mathbf{u}}_{\mathrm{PCC}}$ based on any of the current methods \cite{6}-\cite{12}. By taking $\widetilde{\mathbf{u}}_{\mathrm{PCC}}=\left(\mathbf{u}_{\mathrm{PCC}}-\mathbf{u}_{\mathrm{PCC} 0}\right)$, we can get the linear constraints to characterize a sub-region $\boldsymbol{\Omega}_{\mathrm{SR}}$ of the true AC-feasible power transfer region in the domain of $\mathbf{u}_{\mathrm{PCC}}$. Based on this found sub-region, an exploration strategy will be further discussed in Sec. III-B to find more AC-feasible sub-regions.

\begin{rem}
For convenience of discussion, we assign the reference directions of branches (which may be different from the directions of actual flows) in such a way that branch $i \rightarrow j$ does not share its start node  $i$ with other branches. In this way, $\widetilde{V}_{i}$ of node $i$ can also be indexed by its incident branch index $i \rightarrow j$. Suppose $\widetilde{I}_{i j}$ is the $\dot{l}^{t h}$ element in $\widetilde{\mathbf{I}}$. This leads to a decoupled version of the EP problem across the network branches:

\noindent \textbf{EP-\textit{i}}
\begin{equation}\label{EP-i-OBJ}
\min \left\{\begin{array}{l}
\left(\widetilde{V}_{\min , i}+\widetilde{I}_{\min , i}-\widetilde{V}_{\max , i}-\widetilde{I}_{\max , i}\right) \\
+\left(\mathrm{M}_{\widetilde{V}, i}^{+}+\mathrm{M}_{\widetilde{I}, i}^{+}\right)\left(\widetilde{R}_{\max , i}^{(\mathrm{VI})}-\widetilde{R}_{\min , i}^{(\mathrm{VI})}\right)
\end{array}\right\} / H_{i},
\end{equation}
\begin{equation}\label{EP-i-C1}
\text{s.t.} \left\{\begin{array}{l}
V_{i}^{\min }-V_{0, i} \leq \widetilde{V}_{\min , i} \leq \widetilde{V}_{\max } \leq V_{i}^{\max }-V_{0, i} \\
I_{i}^{\min }-I_{0, i} \leq \widetilde{I}_{\min , i} \leq \widetilde{I}_{\max } \leq \widetilde{I}_{i}^{\max }-I_{0, i}
\end{array}\right.,
\end{equation}
\begin{equation}\label{EP-i-C2}
\left\{\begin{array}{l}
\widetilde{R}_{\min , i}^{(\mathrm{VI})} \leq \widetilde{V}_{\min , i} \widetilde{I}_{\min , i}, \widetilde{R}_{\min , i}^{(\mathrm{VI})} \leq \widetilde{V}_{\max , i} \widetilde{I}_{\min , i} \\
\widetilde{R}_{\min , i}^{(\mathrm{VI})} \leq \widetilde{V}_{\min , i} \widetilde{I}_{\max , i}, \widetilde{R}_{\min , i}^{(\mathrm{VI})} \leq \widetilde{V}_{\max , i} \widetilde{I}_{\max , i}
\end{array}\right.,
\end{equation}
\begin{equation}\label{EP-i-C3}
\left\{\begin{array}{l}
\widetilde{R}_{\max , i}^{(\mathrm{VI})} \geq \widetilde{V}_{\min , i} \widetilde{I}_{\min , i}, \widetilde{R}_{\max , i}^{(\mathrm{VI})} \geq \widetilde{V}_{\max , i} \widetilde{I}_{\min , i} \\
\widetilde{R}_{\max , i}^{(\mathrm{VI})} \geq \widetilde{V}_{\min , i} \widetilde{I}_{\max , i}, \widetilde{R}_{\max , i}^{(\mathrm{VI})} \geq \widetilde{V}_{\max , i} \widetilde{I}_{\max , i}
\end{array}\right.,
\end{equation}
over $\widetilde{V}_{\min , i}, \widetilde{V}_{\max , i}, \tilde{I}_{\min , i}, \tilde{I}_{\max , i}, \widetilde{R}_{\min , i}^{(\mathrm{VI})}, \widetilde{R}_{\max , i}^{(\mathrm{VI})}$, where $\mathrm{M}_{\widetilde{V}, i}^{+}$ is the summation of all the elements from the first $n_{N}$ rows of the $i^{t h}$ column of $\mathbf{M}_{\tilde{\mathbf{x}}}^{+}$; $\mathbf{M}_{\tilde{I}, i}^{+}$ is the summation of all the elements from the last $n_{L}$ rows of the $i^{t h}$ column of $\mathbf{M}_{\tilde{\mathbf{x}}}^{+}$. 

After solving the EP-\textit{i} problem for all the branches (or their corresponding start nodes), the solution of the EP problem is also obtained. Particularly, the minimum objective value of the EP-\textit{i} problem is non-positive (which implies a nonnegative maximum distance between the hyperplanes). This arises from the fact that zero is always a feasible solution to the EP-i problem. This decoupled treatment can be readily extended to the case where a branch $i \rightarrow j$ may share its start node $i$ with other branches.
\end{rem}

\subsection{Exploration strategy based on the found sub-regions}
Although the sub-region obtained in Sec. III-A can guarantee the AC-feasibility, it is difficult for us to know whether the obtained sub-region is overly conservative compared with the true AC-feasible power transfer region. To address this concern, we discuss an exploration strategy to find more sub-regions for a better approximation to the true AC-feasible power transfer region. Our main idea is that a vertex of $\boldsymbol{\Omega}_{\mathrm{SR}}$ can be selected as a new search point to repeat the calculation in Sec. III-A if it satisfies any of the following three conditions.

\textit{Situation 1.} The vertex only lies on the boundary of a found sub-region $\boldsymbol{\Omega}_{\mathrm{SR}}$.

\textit{Situation 2.} The number of the found sub-regions $\boldsymbol{\Omega}_{\mathrm{SR}}$ is smaller than the pre-set maximum number.

\textit{Situation 3.} The searched points do not exceed the pre-set maximum number.

Situation 1 implies that each vertex is expected to construct a new sub-region $\boldsymbol{\Omega}_{\mathrm{SR}}$ that has not been found. Situations 2 and 3 provide opportunities to terminate our characterization method at any time. When no new vertices satisfy any of the three conditions above, our characterization method will terminate. The flowchart of our characterization method is demonstrated in Fig. \ref{Fig.2}. We also propose to use different start points to initialize our characterization method. This provides an opportunity for a better approximation by getting more information from different locations in the true AC-feasible power transfer region. In this paper, such start points are selected based on the maximum and minimum power transfers, since they can provide the information for limitations of VPP adjustment capacities.
\begin{figure}[!t]
\centering
\includegraphics[width=3.2in]{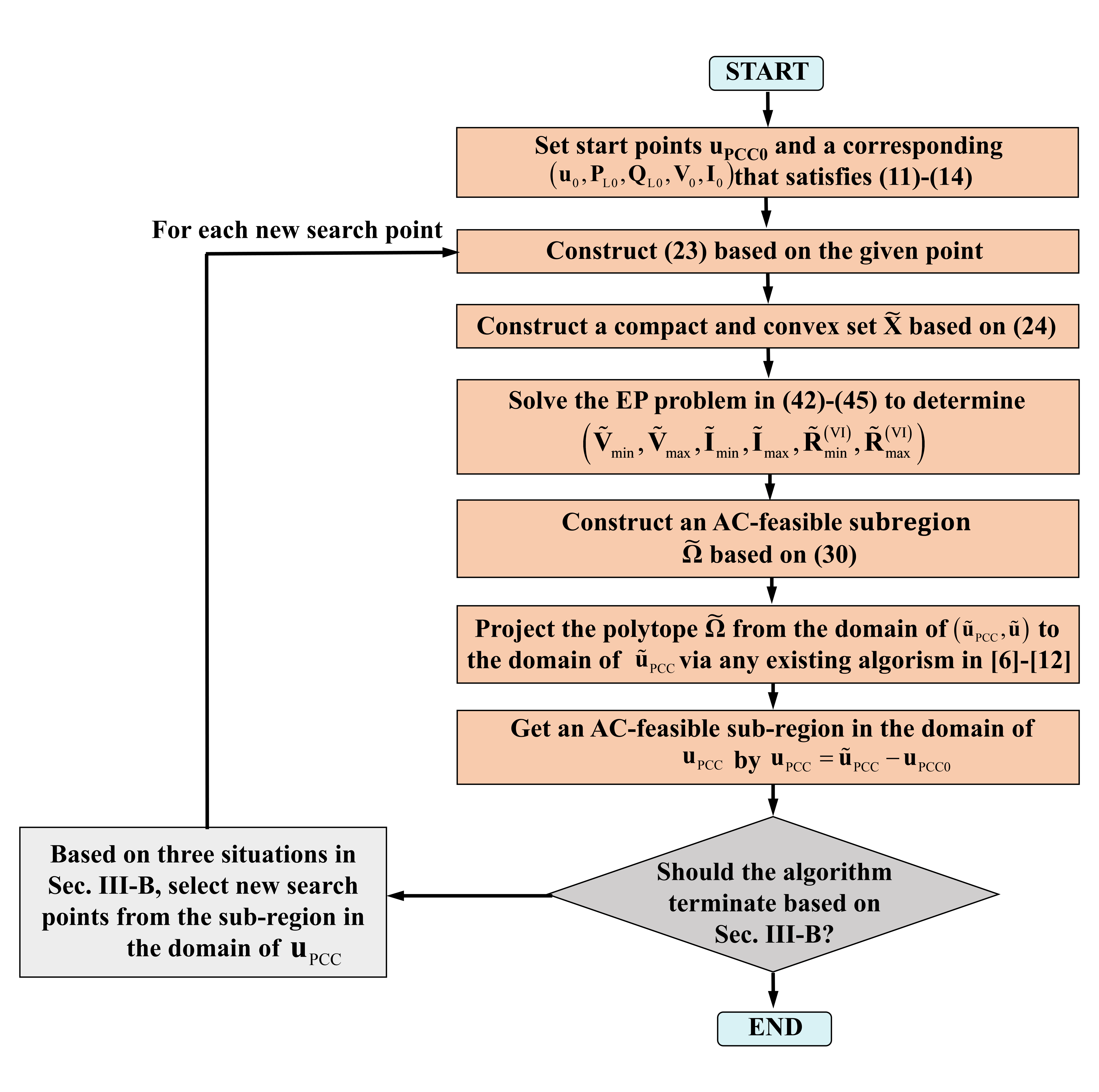}
\caption{Flowchart of the proposed characterization method.}
\label{Fig.2}
\end{figure}

\section{Application of AC-feasible power transfer regions in transmission-level operations}
Once the characterization method in Sec. V is implemented, the AC-feasible power transfer region of a VPP can be represented by the union of several sub-regions in the domain of $\mathbf{u}_\mathrm{P C C}$. The linear constraints to describe these sub-regions can be submitted to the \textit{transmission system operator} (TSO) as constraints to solve for its dispatch. In addition to the linear constraints to describe sub-regions, a VPP can also bid its PCC power transfers in a transmission electricity market [9]. This motivates the formulation of a cost function over such sub-regions in the domain of $\mathbf{u}_\mathrm{P C C}$. Ref. \cite{31} provides a general idea to formulate the equivalent cost function in the domain of $\mathbf{u}_\mathrm{P C C}$ as a convex piecewise surface. In this section, we discuss how to apply the sub-regions with their cost functions.

For convenience of discussion, we consider the coordination of one VPP with the transmission operator, while its generalization to multiple VPPs is straightforward. Assume $n_{\mathrm{J}}$ sub-regions are found based on our characterization method. Particularly, the half-space representation of the $j^{\text {th }}$ sub-region $\boldsymbol{\Omega}_{\mathrm{SR}, j}$ is denoted as $\mathbf{A}_{j} \mathbf{u}_{\mathrm{PCC}} \leqslant \mathbf{B}_{j}$. Also, the equivalent cost function over $\boldsymbol{\Omega}_{\mathrm{SR}, j}$ is a piecewise convex function denoted by $z_{j}\left(\mathbf{u}_{\mathrm{PCC}}\right)$ obtained from \cite{31}.

The incorporation of our sub-regions and their equivalent cost functions into the transmission-level operation can be achieved by introducing binary variables to determine a sub-region in which the final PCC power is located. Based on this idea, we formulate the following optimization problem for the transmission-level operation:

\noindent \textbf{TP:}
\begin{equation}\label{TP-OBJ}
\min C(\mathbf{s})+\sum_{j=1}^{n_{\mathrm{J}}}\left(k_{j} z_{j}\left(\mathbf{u}_{\mathrm{PCC}}\right)\right)
\end{equation}
\begin{equation}\label{TP-C1}
\text{s.t.} \mathbf{g}_{\mathrm{te}}\left(\mathbf{s}, \mathbf{u}_{\mathrm{PCC}}\right)=\mathbf{0},
\end{equation}
\begin{equation}\label{TP-C2}
\mathbf{g}_{\mathrm{tie}}\left(\mathbf{s}, \mathbf{u}_{\mathrm{PCC}}\right) \leq \mathbf{0},
\end{equation}
\begin{equation}\label{TP-C3}
k_{j} \mathbf{A}_{j} \mathbf{u}_{\mathrm{PCC}} \leq k_{j} \mathbf{B}_{j}, \forall j,
\end{equation}
\begin{equation}\label{TP-C4}
\sum_{j=1}^{n_\mathrm{J}} k_{j}=1,
\end{equation}
\begin{equation}\label{TP-C5}
k_{j}=\{0,1\}, \forall j,
\end{equation}
over $\mathbf{s}, \mathbf{u}_{\mathrm{PCC}}, k_{j}, \forall j$, where $\mathbf{s} \in \mathbb{R}^{n_{\mathrm{s}} \times 1}$ is the dispatch variables of the transmission operator; $C(\mathbf{s})$ is a convex cost function of the transmission operator to dispatch $\mathbf{S}$; $\mathbf{g}_\mathrm{te}: \mathbb{R}^{\left(n_{\mathrm{PCC}}+n_{\mathrm{s}}\right) \times 1} \rightarrow \mathbb{R}^{n_{\mathrm{tie}} \times 1}$; $\mathbf{g}_\mathrm{tie}: \mathbb{R}^{\left(n_{\mathrm{PCC}}+n_{\mathrm{s}}\right) \times 1} \rightarrow \mathbb{R}^{n_{\mathrm{tie}} \times 1}$, where $n_{\text {te}}$ and $n_{\text {tie}}$ are respectively the numbers of equalities and inequalities in the transmission network.

The objective function (\ref{TP-OBJ}) is to minimize the total cost on the transmission and VPP sides. The constraint (\ref{TP-C1}) is the compact formulation of transmission-level requirements that can be expressed as equalities, such as power balance. The constraint (\ref{TP-C2}) is the compact formulation of transmission-level requirements that can be expressed as inequalities, such as flow limits. The constraint (\ref{TP-C3}) describes the $j^{t h}$ sub-region in our characterization method. The constraints (\ref{TP-C4})-(\ref{TP-C5}) guarantees the unique selection of such a sub-region. In power industries, the DC model is usually employed by the transmission operator to formulate $\mathbf{g}_\mathrm{te}$ and $\mathbf{g}_\mathrm{tie}$ [27]. However, even if the DC model is employed, the TP problem is still a MINLP which is difficult to solve. Consequently, a big-M formulation is proposed below to linearize the TP problem.

\noindent \textbf{E-TP:}
\begin{equation}\label{E-TP-OBJ}
\min C(\mathbf{s})+\sum_{j=1}^{n_\mathrm{J}} y_{j},
\end{equation}
\begin{equation}\label{E-TP-C1}
\text{s.t. Constraints (\ref{TP-C1})-(\ref{TP-C2}) and (\ref{TP-C4})-(\ref{TP-C5})},
\end{equation}
\begin{equation}\label{E-TP-C2}
\mathbf{A}_{j} \mathbf{u}_{\mathrm{PCC}} \leq \mathbf{B}_{j}+\left(1-k_{j}\right) \mathrm{M}, \forall j,
\end{equation}
\begin{equation}\label{E-TP-C3}
z_{j}\left(\mathbf{u}_{\mathrm{PCC}}\right) \leq y_{j}+\left(1-k_{j}\right) \mathrm{M}, \forall j,
\end{equation}
\begin{equation}\label{E-TP-C4}
0 \leq y_{j}, \forall j,
\end{equation}
over $\mathbf{S}, \mathbf{u}_\mathrm{P C C}, k_{j}, y_{j}, \forall j$,
\noindent where $y_{j}$ is is an ancillary variable in real space; M is a real number that is positive and sufficiently large.

The E-TP problem is a \textit{mixed-integer linear program} (MILP) instead of a MINLP. It is not hard to justify that the E-TP problem is equivalent to the TP problem, i.e., the dispatch decision solved from the E-TP problem is the same as that from the TP problem. Particularly, $\sum_{j=1}^{n_{\mathrm{J}}} y_{j}$ represents the cost of the VPP in the transmission-level operation.

\section{Case Studies}
The proposed methods are verified in the IEEE 33-bus and IEEE 136-bus test systems. System parameters can be found in \cite{32}. All numerical results are calculated with MATLAB R2012a and performed on a laptop equipped with Intel (R) Core (TM) i7-8565U CPU @ 1.80GHz 8.00G RAM. The modeling of all optimization problems is via YALMIP. All NLPs are solved via IPOPT, while all mixed-integer programs are solved under the default settings in YALMIP.

\subsection{Validation of the characterization method }
In this sub-section, the following three methods will be employed in the IEEE 33-bus test system to calculate the AC-feasible power transfer region:
\begin{figure}[b!]
\centering
\includegraphics[width=3.2in]{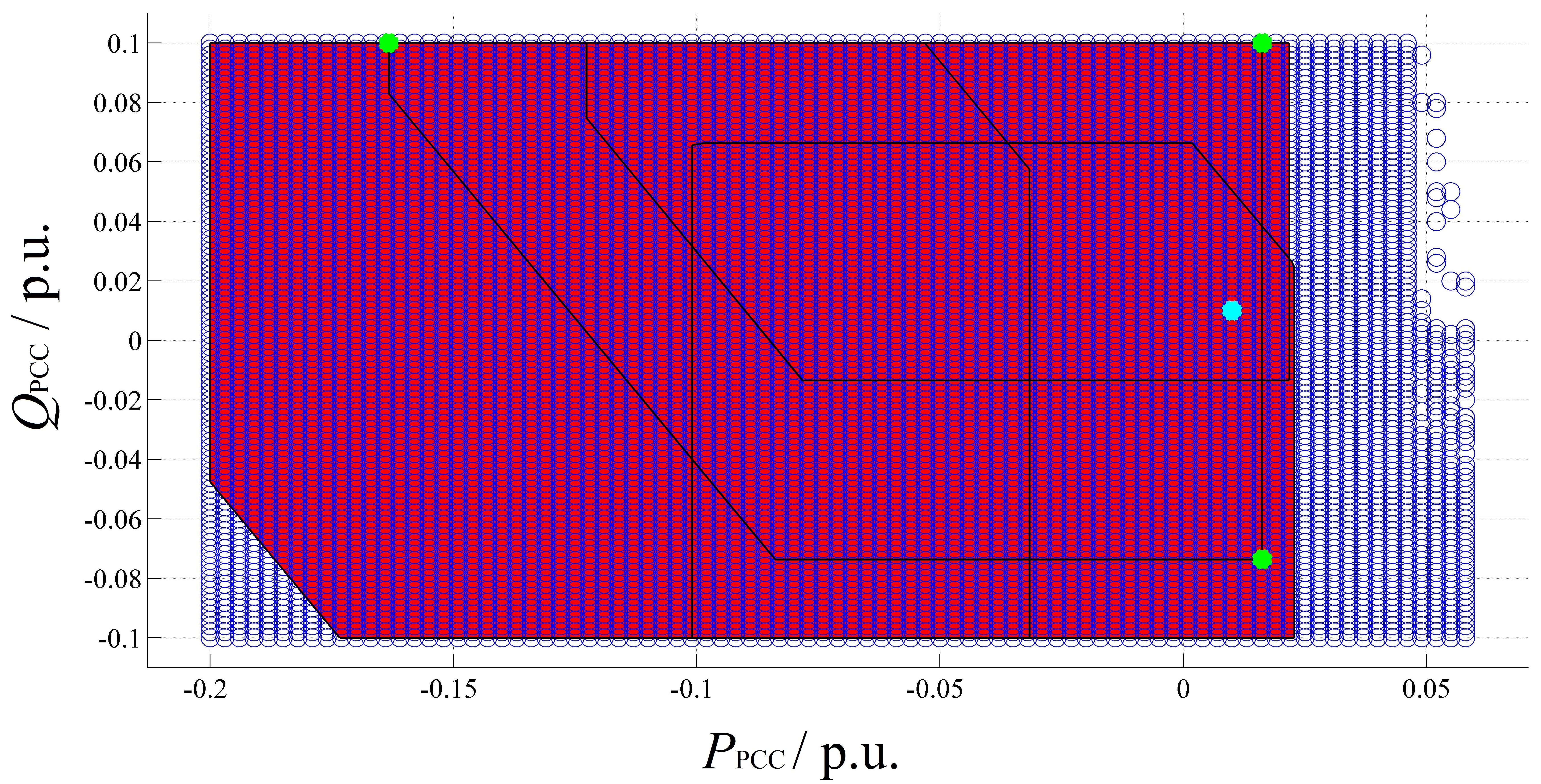}
\caption{Comparison of methods M0-M2 in the IEEE 33-bus test system.}
\label{Fig.3}
\end{figure}

\textbf{M0}: The AC-feasible power transfer region is plotted by brute-force search of discrete AC-feasible points. The searched range of $\mathbf{u}_{\mathrm{PCC}}=\left(P_{\mathrm{PCC}}, Q_{\mathrm{PCC}}\right)$ is evenly discretized as 101×101=10201 points. Given each point, its feasibility for the constraints (\ref{C1})-(\ref{C4}) is checked. The feasible region obtained by this method is regarded as a benchmark.

\textbf{M1}: Our characterization method proposed in Sec. III.

\textbf{M2}: The method in \cite{20} which is a representative to guarantee the AC-feasibility by linear constraints. Particularly, the feasible region is pre-defined as a rectangle in \cite{20}.

In the IEEE 33-bus test system, the M1 method (i.e., the proposed method) stops after the second iteration. As demonstrated in Fig. \ref{Fig.3}, four points (the cyan point for the first iteration and the three green points for the second iteration) are used to construct four sub-regions in which AC-feasibility is guaranteed. The union of the four sub-regions serves as the result of our method.

The comparison between methods M0-M2 is also shown in Fig. \ref{Fig.3}. The true AC-feasible power transfer region (i.e., blue points) obtained by the M0 method is non-convex. The proposed M1 method provides an inner approximation to the true feasible region with four sub-regions. Particularly, 86.18 $\%$ of feasible points in the M0 method are contained in the result of two iterations in the M1 method. Meanwhile, given any feasible point as a start point, the previous M2 method can also construct a feasible rectangular inner approximation, as reported in \cite{20}. Particularly, we use the same set of four points as in our proposed method to serve as the start points for the M2 method. In this test system, the M2 method just returns the four points themselves as the result inner approximation, without any further exploration. Our conjecture is that the restrictive rectangular shape in the M2 method makes the inner approximation so conservative in this test system.

\begin{figure}[b!]
\centering
\includegraphics[width=3.2in]{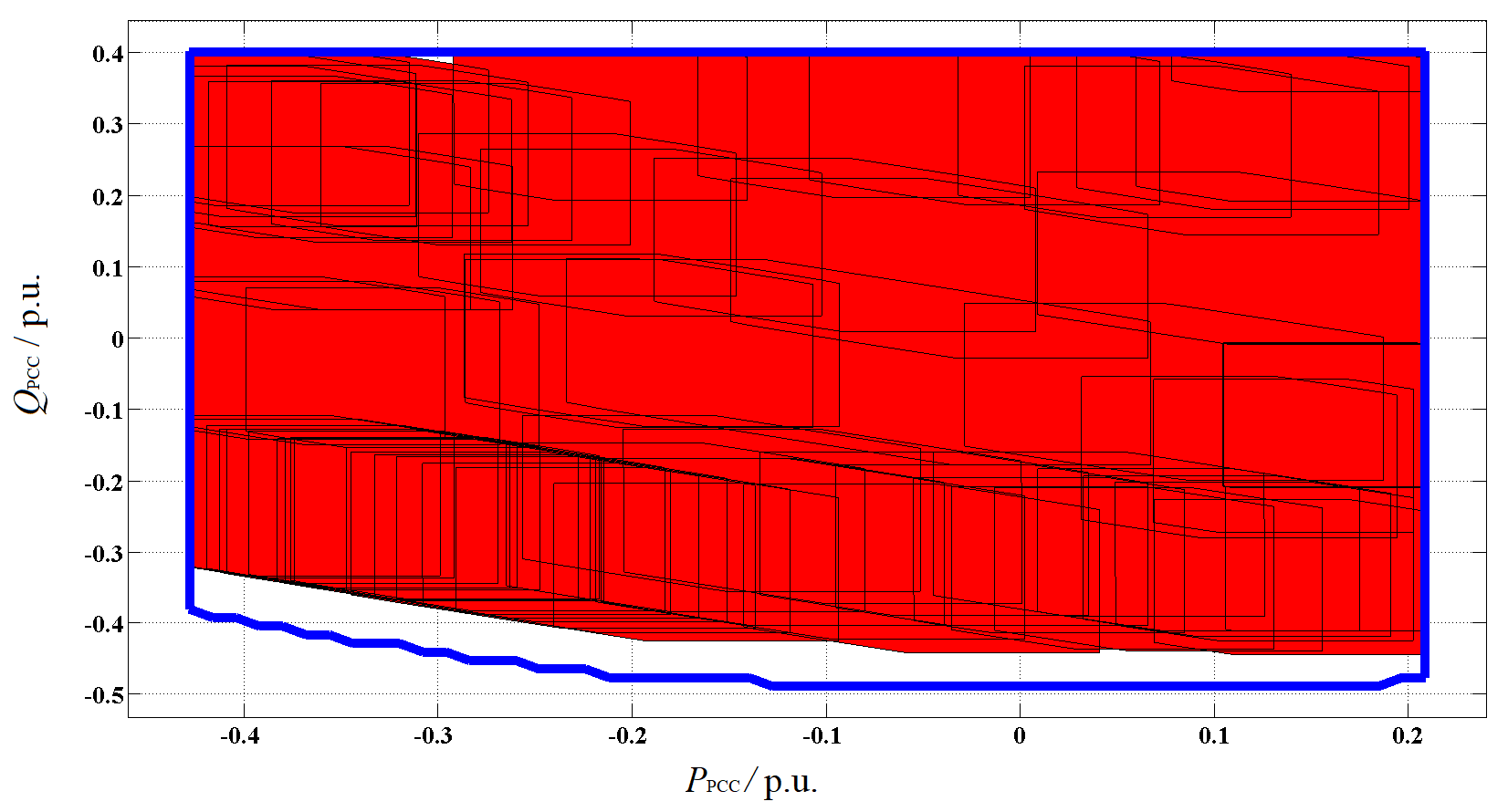}
\caption{Comparison of our characterization with the true feasible region in the IEEE 136-bus test system. The blue solid lines are the boundaries of the true AC-feasible power transfer region. A red shape indicates a sub-region in our characterization method.}
\label{Fig.4}
\end{figure}

We further test our characterization method in the IEEE 136-bus test system. In this test system, we set up four different points to initialize our characterization method in the first iteration. These start points are selected based on the maximum and minimum power transfers subject to the original constraints (\ref{C1})-(\ref{C4}) to reflect the limitations of VPP adjustment capacities. 70 sub-regions are constructed based on our characterization method, and their union is compared with the boundaries of the true AC-feasible power transfer region obtained by brute-force search, as shown in Fig. \ref{Fig.4}. In Fig. \ref{Fig.4}, our result covers 94.66$\%$ of the true AC-feasible power transfer region obtained by brute-force search. This indicates that the union of 70 sub-regions has a good approximate to the true AC-feasible power transfer region. Particularly, the union of any number of such sub-regions can serve as an inner approximation to the true AC-feasible power transfer region. Consequently, when we stop our characterization method at any time, our result guarantees the valid linear constraints to satisfy AC-feasibility.

When we construct sub-regions shown in Figs. \ref{Fig.3}-\ref{Fig.4}, our characterization method needs to solve the EP problem which is nonconvex. Its computational concern can be alleviated by the following features of our method.

First, the EP problem can be decomposed as EP-\textit{i} problems by branches, as demonstrated in Remark 1. Particularly, for branch $i \rightarrow j$ whose start node $i$ is not shared with other branches, its EP-\textit{i} problem only has six variables $\left(\widetilde{V}_{\min , i}, \widetilde{V}_{\max , i}, \widetilde{I}_{\min , i}, \widetilde{I}_{\max , i}, \widetilde{R}_{\min , i}^{(\mathrm{VI})}, \widetilde{R}_{\max , i}^{(\mathrm{VI})}\right)$ which subject to the constraints (\ref{EP-C1})-(\ref{EP-C3}). Consequently, scale of the EP-\textit{i} problem is small. This decoupled structure of the EP problem facilitates its solution.

Second, our method is adaptive to the changing conditions in a distribution network. The result of our method in the $\left(\mathbf{u}_{\mathrm{PCC}}, \mathbf{u}\right)$ domain remains the same as long as the topology and parameters of the distribution network remain unchanged. The feasibility under any load changes can just be immediately checked by the result in $\left(\mathbf{u}_{\mathrm{PCC}}, \mathbf{u}\right)$. Moreover, we can perform the proposed method offline with respect to a set of typical topologies and parameters of the network, to save time during online implementation.

\subsection{Validation of the transmission-level application}
In this sub-section, the application method in Sec. IV to coordinate VPPs and the transmission network will be verified in the IEEE 136-bus test system. Ten IEEE 136-bus test systems are coordinated with a 661-bus transmission system from a province in China. Sub-regions in Fig. \ref{Fig.4} and their equivalent cost functions obtained by [31] will be utilized. As for the 661-bus transmission system, the DC power flow model in [32] is employed following the common practice of TSOs. Two problems are solved for the coordination between the ten VPPs and the TSO system: one is the TP problem (a MINLP), and the other is the E-TP problem (a MILP) based on our big-M linearization. In this testing, 1438 continuous variables, 700 binary variables, and 23692 constraints are involved in the E-TP problem. The scale of the TP problem is similar. It is found that solving the TP problem requires more than half an hour, which exceeds the common time window allowed (e.g., 15 minutes in Guangdong Power Grid, China). In contrast, our proposed method can solve the E-TP problem in 539.6 seconds, while satisfying the AC feasibility of PCC power exchanges.

\section{Conclusions}
To facilitate the participation of a \textit{virtual power plant} (VPP) in the transmission-level operation, a characterization method is proposed in this paper to determine the feasible power transfer region of a VPP. The AC feasibility of such a region can be guaranteed by the proposed characterization method. We also developed a big-M formulation to accurately linearize the coordinated operation problem for the VPPs and the transmission operator. Numerical results in the IEEE 33-bus and the IEEE 136-bus test systems demonstrate the effectiveness of the proposed methods.

\section*{Appendix A: Proof of Proposition 2}
Consider an optimization problem R-OP, which is a relaxation to the OP problem by removing the constraint (\ref{P1-1}). The R-OP problem can be decomposed as two optimization problems whose variables and constraints are respectively associated with $\widetilde{\mathbf{x}}_{\mathrm{VI}}$ and $\widetilde{\mathbf{x}}_{\mathrm {Line }}$ defined in (\ref{Solution}):

\noindent \textbf{R-OP-1:}
\begin{small}
\begin{equation}\label{A1}
\min \mathbf{e}_{1}^{T}\left\{\left\{\mathbf{M}_{\tilde{\mathbf{x}}}^{+}\left(\left(\tilde{\mathbf{R}}_{\max }^{(\mathrm{P})}-\tilde{\mathbf{R}}_{\min }^{(\mathrm{P})}\right)+\left(\tilde{\mathbf{R}}_{\max }^{(\mathrm{Q})}-\tilde{\mathbf{R}}_{\min }^{(\mathrm{Q})}\right)\right)\right\} \cdot / \mathbf{H}\right\},
\end{equation}
\end{small}
\begin{equation}\label{A2}
\text{s.t. Constraints (\ref{RmaxP})-(\ref{RminQ})},
\end{equation}
\noindent over $\widetilde{\mathbf{P}}_{\text {Lmin }}$,$\widetilde{\mathbf{P}}_{\text {Lmax }}$,$\widetilde{\mathbf{Q}}_{\text {Emin }}$,$\widetilde{\mathbf{Q}}_{\text {Emax }}$,$\widetilde{\mathbf{R}}_{\min }^{(\mathrm{P})}$,$\widetilde{\mathbf{R}}_{\max }^{(\mathrm{P})}$,$\widetilde{\mathbf{R}}_{\min }^{(\mathrm{Q})}$,$\widetilde{\mathbf{R}}_{\max }^{(\mathrm{Q})}$.

\noindent \textbf{R-OP-2:}
\begin{small}
\begin{equation}\label{A3}
\min \mathbf{e}_{1}^{T}\left\{\left\{\begin{array}{l}
{\left[\begin{array}{ll}
\widetilde{\mathbf{V}}_{\min }^{T} & \widetilde{\mathbf{I}}_{\min }^{T}
\end{array}\right]^{T}-\left[\begin{array}{ll}
\widetilde{\mathbf{V}}_{\max }^{T} & \widetilde{\mathbf{I}}_{\max }^{T}
\end{array}\right]^{T}} \\
+\mathbf{M}_{\widetilde{\mathbf{x}}}^{+}\widetilde{\mathbf{R}}_{\max }^{(\mathrm{VI})}-\mathbf{M}_{\widetilde{\mathbf{x}}}^{+}\widetilde{\mathbf{R}}_{\min }^{(\mathrm{VI})}
\end{array}\right\} . / \mathbf{H}\right\},
\end{equation}
\end{small}
\begin{equation}\label{A4}
\text{s.t.} \quad \text{Constraints (\ref{X2}) and (\ref{EP-C1})-(\ref{EP-C3})},
\end{equation}
\noindent over $\widetilde{\mathbf{V}}_{\min }, \widetilde{\mathbf{V}}_{\max }, \widetilde{\mathbf{I}}_{\min }, \widetilde{\mathbf{I}}_{\max }, \widetilde{\mathbf{R}}_{\min }^{(\mathrm{VI})}, \widetilde{\mathbf{R}}_{\max }^{(\mathrm{VI})}$.

For the R-OP-1 problem, the terms $\left(\widetilde{\mathbf{R}}_{\max }^{(\mathrm{P})}-\widetilde{\mathbf{R}}_{\min }^{(\mathrm{P})}\right)$ and $\left(\widetilde{\mathbf{R}}_{\max }^{(\mathrm{Q})}-\widetilde{\mathbf{R}}_{\min }^{(\mathrm{Q})}\right)$ in the objective function (\ref{A1}) are non-negative. Consequently, the optimal objective value of the R-OP-1 problem is zero, which occurs when $\widetilde{\mathbf{x}}_{\mathrm {Line }}$ attains zero.

The R-OP-2 problem is equivalent to the EP problem by applying Lemma 1 in Appendix B. We will next show that the constraint (\ref{P1-1}) is also fulfilled by the optimal solution of the R-OP-2 problem (i.e., the EP problem) and $\widetilde{\mathbf{x}}_{\mathrm {Line }}=\mathbf{0}$. When $\widetilde{\mathbf{x}}_{\mathrm {Line }}=0$, the constraint (\ref{P1-1}) becomes
\begin{equation}\label{A5}
\begin{aligned}
&{\left[\begin{array}{ll}
\tilde{\mathbf{V}}_{\min }^{T} & \tilde{\mathbf{I}}_{\min }^{T}
\end{array}\right]^{T}-\left[\begin{array}{ll}
\tilde{\mathbf{V}}_{\max }^{T} & \tilde{\mathbf{I}}_{\max }^{T}
\end{array}\right]^{T}} \\
&\leq \mathbf{M}_{\tilde{\mathbf{x}}}^{+} \tilde{\mathbf{R}}_{\min }^{(\mathrm{VI})}-\mathbf{M}_{\tilde{\mathbf{x}}}^{+} \widetilde{\mathbf{R}}_{\max }^{(\mathrm{VI})}
\end{aligned}.
\end{equation}

Given the optimal solution $\widetilde{\mathbf{x}}_{\mathrm{VI}}^{*}=\left(\widetilde{\mathbf{V}}_{\min }^{*}, \widetilde{\mathbf{V}}_{\max }^{*}, \widetilde{\mathbf{I}}_{\min }^{*}, \widetilde{\mathbf{I}}_{\max }^{*}, \widetilde{\mathbf{R}}_{\min }^{(\mathrm{VI})^{*}}, \widetilde{\mathbf{R}}_{\max }^{(\mathrm{VI}) *}\right)$ of the EP problem, the fulfillment of the constraint (\ref{P1-1}) can be checked by solving the following LP:

\noindent \textbf{P-LP:}
\begin{equation}\label{A6}
\min \mathbf{e}_{1}^{T} \mathbf{s}_{1}
\end{equation}
\begin{equation}\label{A7}
\text{s.t.} \quad \mathbf{s}_{1} \geq \mathbf{0},
\end{equation}
\begin{equation}\label{A8}
\begin{aligned}
&{\left[\left(\widetilde{\mathbf{V}}_{\min }^{*}\right)^{T}\left(\widetilde{\mathbf{I}}_{\min }^{*}\right)^{T}\right]^{T}-\left[\left(\widetilde{\mathbf{V}}_{\max }^{*}\right)^{T}\left(\widetilde{\mathbf{I}}_{\max }^{*}\right)^{T}\right]^{T}} \\
&\leq-\mathbf{M}_{\widetilde{\mathbf{x}}}^{+} \widetilde{\mathbf{R}}_{\max }^{(\mathrm{VI})^{*}}+\mathbf{M}_{\widetilde{\mathbf{x}}}^{+} \widetilde{\mathbf{R}}_{\min }^{(\mathrm{VI})^{*}}+\mathbf{s}_{1}
\end{aligned},
\end{equation}
\noindent over $\mathbf{s}_{1} \in \mathbb{R}^{\left(n_{\mathrm{L}}+n_{\mathrm{N}}\right) \times 1}$.

By strong duality, the minimum value of the P-LP problem equals the maximum objective value of its dual problem:

\noindent \textbf{D-LP:}
\begin{small}
\begin{equation}\label{A9}
\max \mathbf{\lambda}^{T}\left\{\begin{array}{ll}
{\left[\begin{array}{ll}
\left(\widetilde{\mathbf{V}}_{\text {m in }}^{*}\right)^{T} & \left(\widetilde{\mathbf{I}}_{\text {m in }}^{*}\right)^{T}
\end{array}\right]^{T}+\mathbf{M}_{\widetilde{\mathbf{x}}}^{+} \widetilde{\mathbf{R}}_{\text {max }}^{(\mathrm{VI})^{*}}} \\
-\left[\begin{array}{ll}
\left(\widetilde{\mathbf{V}}_{\max }^{*}\right)^{T} & \left(\widetilde{\mathbf{I}}_{\max }^{*}\right)^{T}
\end{array}\right]^{T}-\mathbf{M}_{\widetilde{\mathbf{x}}}^{+} \widetilde{\mathbf{R}}_{\min }^{(\mathrm{VI})^{*}}
\end{array}\right\},
\end{equation}
\end{small}
\begin{equation}\label{A10}
\text{s.t.} \quad \mathbf{0} \leq \mathbf{\lambda} \leq \mathbf{1},
\end{equation}
\noindent over $\mathbf{\lambda} \in \mathbb{R}^{\left(n_{\mathrm{L}}+n_{\mathrm{N}}\right) \times 1}$.

The optimal solution of the EP problem satisfies the constraint (\ref{A5}), if and only if the maximum objective value of the D-LP problem is zero. In the remaining part of this proof, we will show that the zero maximum dual objective value is indeed attained.

The maximum objective value of the D-LP problem is taken at a vertex of the constraint (\ref{A10}). i.e., it equals
\begin{equation}\label{A11}
\left\{z_{0}^{*}, z_{1}^{*}, \ldots, z_{\mathrm{S}-1}^{*}\right\}^{\max },
\end{equation}
\noindent where $\mathrm{S}=2^{\left(n_\mathrm{L}+n_{\mathrm{N}}\right)}$ is the number of vertices and $z_{k}^{*}$ denotes the value of 
\begin{equation}\label{A12}
\lambda_{k}^{T}\left\{\begin{array}{ll}
{\left[\left(\widetilde{\mathbf{V}}_{\min }^{*}\right)^{T}\right.} \left.\left(\widetilde{\mathbf{I}}_{\min}^{*}\right)^{T}\right]^{T}+\mathbf{M}_{\widetilde{\mathbf{x}}}^{+} \widetilde{\mathbf{R}}_{\max }^{(\mathrm{VI})^{*}} \\
-\left[\begin{array}{ll}
\left(\widetilde{\mathbf{V}}_{\max }^{*}\right)^{T} & \left(\widetilde{\mathbf{I}}_{\max }^{*}\right)^{T}
\end{array}\right]^{T}-\mathbf{M}_{\widetilde{\mathbf{x}}}^{+} \widetilde{\mathbf{R}}_{\min }^{(\mathrm{VI})^{*}}
\end{array}\right\}.
\end{equation}

Particularly, all the elements in $\mathbf{\lambda}_{0}$ are zero and thus $z_{0}^{*}$ equals zero. Meanwhile, at least one element in $\mathbf{\lambda}_{k}(k=1,2, \ldots, \mathrm{S}-1)$ equals one. Based on Remark 1, (\ref{A12}) can be separated based on branches, i.e.,
\begin{footnotesize}
\begin{equation}\label{A13}
\begin{aligned}
z_{k}^{*} &=\sum_{i=1}^{n_{L}} z_{k, i}^{*} \\
&=\sum_{i=1}^{n_{L}}\left\{\begin{array}{l}
\lambda_{k, i}\left(\widetilde{V}_{\min , i}^{*}-\widetilde{V}_{\max , i}^{*}\right)+\lambda_{k, i+n_{\mathrm{N}}}\left(\widetilde{I}_{\min , i}^{*}-\widetilde{I}_{\max , i}^{*}\right) \\
+\left(\lambda_{k, i} M_{\widetilde{V}, i}^{+}+\lambda_{k, i+n_{\mathrm{N}}} M_{\widetilde{I}, i}^{+}\right)\left(\widetilde{R}_{\mathrm{max}, i}^{(\mathrm{VI})^{*}}-\widetilde{R}_{\min , i}^{(\mathrm{VI})^{*}}\right)
\end{array}\right\}.
\end{aligned}
\end{equation}
\end{footnotesize}
By applying Lemma 2 in Appendix B, each $z_{k, i}^{*}$ in (\ref{A13}) is non-positive and thus $z_{k}^{*}$ is also non-positive. Because $z_{0}^{*}$ is zero and $z_{k}^{*}$ is non-positive, the maximum objective value of the D-LP problem is zero, i.e., the constraint (\ref{P1-1}) is fulfilled with the optimal solution of the EP problem and $\widetilde{\mathbf{x}}_{\mathrm {Line }}=\mathbf{0}$.

\section*{Appendix B: Lemmas 1 and 2}
Suppose there are 1) continuous variables $x_{\max }$, $x_{\min }$, $y_{\max }$, $y_{\min }$, $r_{\min }$, $r_{\min } \in \mathbb{R}$, 2) coefficients $\mathrm{M}_{x}^{+}$ and $\mathrm{M}_{y}^{+} \in\left\{0 \cup \mathbb{R}^{+}\right\}$, and 3) bounds $x_{\max,0 }$, $x_{\min,0 }$, $y_{\max,0 }$,$y_{\min,0 } \in \mathbb{R}$ and $x_{\min , 0} \leqslant 0 \leqslant x_{\max , 0}$ and $y_{\min , 0} \leqslant 0 \leqslant y_{\max , 0}$. Define an optimization problem as
\begin{equation}\label{B1}
\min x_{\min }+y_{\min }-x_{\max }-y_{\max }+\left(\mathrm{M}_{x}^{+}+\mathrm{M}_{y}^{+}\right)\left(r_{\max }-r_{\min }\right),
\end{equation}
\begin{equation}\label{B2}
\text{s.t.} \quad
\left\{\begin{array}{l}
x_{\min , 0} \leq x_{\min } \leq x_{\max } \leq x_{\max , 0} \\
y_{\min , 0} \leq y_{\min } \leq y_{\max } \leq y_{\max , 0}
\end{array}\right.,
\end{equation}
\begin{equation}\label{B3}
\left\{\begin{array}{l}
r_{\max } \geq x_{\min } y_{\min }, r_{\max } \geq x_{\max } y_{\max } \\
r_{\max } \geq x_{\max } y_{\min }, r_{\max } \geq x_{\min } y_{\max }
\end{array}\right.,
\end{equation}
\begin{equation}\label{B4}
\left\{\begin{array}{l}
r_{\min } \leq x_{\max } y_{\min }, r_{\min } \leq x_{\min } y_{\max } \\
r_{\min } \leq x_{\max } y_{\max }, r_{\min } \leq x_{\min } y_{\min }
\end{array}\right.,
\end{equation}
\noindent over $x_{\min }, x_{\max }, y_{\min }, y_{\max }, r_{\min }, r_{\max }$.

\begin{Lemma}
The optimal solutions $r_{\max }^{*}$ and $r_{\min }^{*}$ equal 
\begin{equation}\label{B5}
\left\{\begin{aligned}
r_{\max }^{*} &=\left\{x_{\min }^{*} y_{\min }^{*}, x_{\min }^{*} y_{\max }^{*}, x_{\max }^{*} y_{\min }^{*}, x_{\max }^{*} y_{\max }^{*}\right\}_{\max } \\
r_{\min }^{*} &=\left\{x_{\min }^{*} y_{\min }^{*}, x_{\min }^{*} y_{\max }^{*}, x_{\max }^{*} y_{\min }^{*}, x_{\max }^{*} y_{\max }^{*}\right\}_{\min }
\end{aligned}\right..
\end{equation}
\end{Lemma}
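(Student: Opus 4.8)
The plan is to exploit the fact that $r_{\max}$ and $r_{\min}$ enter the objective (\ref{B1}) only through the single term $(\mathrm{M}_x^+ + \mathrm{M}_y^+)(r_{\max}-r_{\min})$, whose coefficient is non-negative by assumption, and that their feasible ranges are governed solely by (\ref{B3})-(\ref{B4}) once $x_{\min},x_{\max},y_{\min},y_{\max}$ are fixed. First I would note that the four inequalities in (\ref{B3}) are precisely the statements $r_{\max}\ge p$ for each corner product $p\in\{x_{\min}y_{\min}, x_{\min}y_{\max}, x_{\max}y_{\min}, x_{\max}y_{\max}\}$, so feasibility for $r_{\max}$ is equivalent to $r_{\max}\ge \{x_{\min}y_{\min}, x_{\min}y_{\max}, x_{\max}y_{\min}, x_{\max}y_{\max}\}_{\max}$; symmetrically, (\ref{B4}) is equivalent to $r_{\min}\le \{x_{\min}y_{\min}, x_{\min}y_{\max}, x_{\max}y_{\min}, x_{\max}y_{\max}\}_{\min}$. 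No appeal to the bilinear structure is even needed here, since the four products appear verbatim as the binding bounds.

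Next, because the objective is non-decreasing in $r_{\max}$ and non-increasing in $r_{\min}$, for any fixed feasible $(x_{\min},x_{\max},y_{\min},y_{\max})$ it is minimized over $(r_{\max},r_{\min})$ by driving $r_{\max}$ down to its tightest lower bound and $r_{\min}$ up to its tightest upper bound, i.e., by choosing the vertex-maximum and vertex-minimum identified above. Since the $(r_{\max},r_{\min})$-block of the objective is thus separable from, and minimized independently of, the $(x,y)$-block, an optimal solution of the whole problem must satisfy these identities evaluated at the optimal $(x^*,y^*)$, which is exactly (\ref{B5}). I expect this monotone-separation step to be the crux; the only care needed is to phrase it cleanly, namely that the objective restricted to the $r$-variables is an increasing affine function of $(r_{\max},-r_{\min})$, so the minimizer sits at the boundary carved out by (\ref{B3})-(\ref{B4}).

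The one case requiring separate comment is the degenerate coefficient $\mathrm{M}_x^+ + \mathrm{M}_y^+ = 0$: then $r_{\max},r_{\min}$ vanish from (\ref{B1}) and are no longer forced to the boundary by optimality, but the claimed vertex values remain feasible and attain the same (hence optimal) objective, so (\ref{B5}) may be taken to hold without loss of generality. I would also remark that an optimum is attained, since the $(x,y)$-endpoints lie in the compact box (\ref{B2}) and, when the coefficient is positive, the objective is coercive in $(r_{\max},-r_{\min})$ over the feasible set. Beyond these bookkeeping points I do not anticipate a genuine obstacle; the result is essentially the observation that the tightest interval consistent with (\ref{B3})-(\ref{B4}) is the one whose endpoints are the extreme corner products.
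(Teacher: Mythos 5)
Your proof is correct. Note that the paper itself offers no proof of this lemma at all --- it states only that ``the proof of Lemma 1 is not provided in this paper since this result is straightforward'' --- so there is nothing to compare against except the argument the authors presumably had in mind, which is exactly the one you give: the constraints (\ref{B3})--(\ref{B4}) are verbatim the statements that $r_{\max}$ upper-bounds and $r_{\min}$ lower-bounds the four corner products, and since the objective is affine with non-negative coefficient on $r_{\max}-r_{\min}$, the minimum over the $r$-block (for any fixed feasible $x,y$ values) is attained by collapsing $r_{\max}$ and $r_{\min}$ onto the extreme corner products. Your write-up is in fact more careful than the paper's ``straightforward'' dismissal on two points worth keeping: the degenerate case $\mathrm{M}_x^+ + \mathrm{M}_y^+ = 0$, where the $r$-variables are no longer forced to the boundary and (\ref{B5}) holds only in the sense that such a choice is \emph{an} optimal solution (this is the honest reading of the lemma, since later uses in Lemma 2 and Appendix A only need some optimizer of this form); and the attainment of an optimum, which follows from compactness of the box (\ref{B2}) after eliminating $(r_{\max},r_{\min})$ by the monotone substitution. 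No gap.
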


The proof of Lemma 1 is not provided in this paper since this result is straightforward.
\begin{Lemma}
Given $\lambda_{x}$ and $\lambda_{y} \in\{0,1\}$ and the optimal solutions (\ref{B5}), the following inequality holds
\begin{equation}\label{B6}
\begin{aligned}
&\lambda_{x}\left(x_{\min }^{*}-x_{\max }^{*}\right)+\lambda_{y}\left(y_{\min }^{*}-y_{\max }^{*}\right) \\
&\leq\left(\lambda_{x} \mathrm{M}_{x}^{+}+\lambda_{y} \mathrm{M}_{y}^{+}\right)\left(r_{\min }^{*}-r_{\max }^{*}\right)
\end{aligned}.
\end{equation}
\end{Lemma}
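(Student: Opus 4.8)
The plan is to recast the target inequality (\ref{B6}) into a symmetric form and then split it into two independent ``one-block'' bounds, each of which I can obtain from a single global feasibility comparison rather than from any sign-based case analysis.

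First I would rearrange (\ref{B6}). From (\ref{B3})--(\ref{B4}) one has $r_{\max}^{*}\ge r_{\min}^{*}$ (this is also immediate from the form (\ref{B5}) in Lemma~1), and $\lambda_x,\lambda_y\in\{0,1\}$ with $\mathrm{M}_x^{+},\mathrm{M}_y^{+}\ge 0$. Multiplying (\ref{B6}) through by $-1$ shows it is equivalent to
\begin{equation}
\lambda_x\bigl(x_{\max}^{*}-x_{\min}^{*}\bigr)+\lambda_y\bigl(y_{\max}^{*}-y_{\min}^{*}\bigr)\ge\bigl(\lambda_x\mathrm{M}_x^{+}+\lambda_y\mathrm{M}_y^{+}\bigr)\bigl(r_{\max}^{*}-r_{\min}^{*}\bigr).
\end{equation}
Since $\lambda_x,\lambda_y\ge 0$, it then suffices to prove the two per-block bounds
\begin{equation}
x_{\max}^{*}-x_{\min}^{*}\ge \mathrm{M}_x^{+}\bigl(r_{\max}^{*}-r_{\min}^{*}\bigr),\qquad y_{\max}^{*}-y_{\min}^{*}\ge \mathrm{M}_y^{+}\bigl(r_{\max}^{*}-r_{\min}^{*}\bigr),
\end{equation}
and take the $\lambda_x$- and $\lambda_y$-weighted sum; this covers all four vertices $(\lambda_x,\lambda_y)\in\{0,1\}^2$ simultaneously.

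The central step is to prove the first per-block bound by exhibiting a competing feasible point. Given the optimizer $(x_{\min}^{*},x_{\max}^{*},y_{\min}^{*},y_{\max}^{*},r_{\min}^{*},r_{\max}^{*})$, I would construct the point that zeroes the $x$-block: set $x_{\min}=x_{\max}=0$ and $r_{\min}=r_{\max}=0$, keeping $y_{\min}=y_{\min}^{*}$ and $y_{\max}=y_{\max}^{*}$. This point is feasible: the hypothesis $x_{\min,0}\le 0\le x_{\max,0}$ makes (\ref{B2}) hold, and with $x_{\min}=x_{\max}=0$ all four bilinear products in (\ref{B3})--(\ref{B4}) vanish, so $r_{\min}=r_{\max}=0$ is admissible. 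The objective (\ref{B1}) evaluates to $y_{\min}^{*}-y_{\max}^{*}$ at this point, while at the optimizer it equals $(x_{\min}^{*}-x_{\max}^{*})+(y_{\min}^{*}-y_{\max}^{*})+(\mathrm{M}_x^{+}+\mathrm{M}_y^{+})(r_{\max}^{*}-r_{\min}^{*})$. Optimality forces the latter to be no larger than the former, which collapses to $(x_{\min}^{*}-x_{\max}^{*})+(\mathrm{M}_x^{+}+\mathrm{M}_y^{+})(r_{\max}^{*}-r_{\min}^{*})\le 0$; since $\mathrm{M}_x^{+}\le\mathrm{M}_x^{+}+\mathrm{M}_y^{+}$ and $r_{\max}^{*}-r_{\min}^{*}\ge 0$, this gives the first per-block bound (indeed the stronger version carrying $\mathrm{M}_x^{+}+\mathrm{M}_y^{+}$). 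The mirror-image construction that zeroes the $y$-block yields the second bound.

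The part needing the most care is the choice of this competitor rather than a local endpoint perturbation. A direct perturbation would force me to track how the bilinear range $r_{\max}-r_{\min}$ responds to moving $x_{\min}^{*},x_{\max}^{*}$, entailing case analysis on the positions of these endpoints relative to $0$ (and likewise for $y$). The ``collapse to zero'' competitor sidesteps this: it is globally feasible precisely because $0$ lies in the admissible interval of each block, and it drives all corner products---hence the entire range term---to zero in one stroke, so the range penalty accrued at the optimum must be paid for by the interval width. The only residual bookkeeping is to confirm feasibility against (\ref{B2})--(\ref{B4}) and to keep the coefficient inequality $\mathrm{M}_x^{+}\le\mathrm{M}_x^{+}+\mathrm{M}_y^{+}$ straight; no monotonicity or sign casework is required.
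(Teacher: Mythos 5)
Your proof is correct and is essentially the paper's own argument: the paper handles the nontrivial cases $(\lambda_x,\lambda_y)=(1,0)$ and $(0,1)$ by fixing the other block at its optimal value and invoking optimality against the feasible point that zeroes the block's variables together with $(r_{\min},r_{\max})$ — exactly your ``collapse to zero'' competitor, feasible for the same reason ($0$ lies in each admissible interval). Your only departures are organizational: you prove the two per-block bounds once and take the $\lambda$-weighted sum instead of enumerating the four vertex cases, and you make explicit the coefficient step $\mathrm{M}_x^{+}\le \mathrm{M}_x^{+}+\mathrm{M}_y^{+}$ with $r_{\max}^{*}-r_{\min}^{*}\ge 0$ that the paper leaves implicit.
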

\begin{proof}
$\left(\lambda_{x}, \lambda_{y}\right)$ can be selected as (0,0), (1,1), (1,0), and (0,1). Obviously, when $\left(\lambda_{x}, \lambda_{y}\right)=(0,0)$ and $\left(\lambda_{x}, \lambda_{y}\right)=(1,1)$, (\ref{B6}) holds. When $\left(\lambda_{x}, \lambda_{y}\right)=(1,0)$, (\ref{B6}) becomes
\begin{equation}\label{B7}
\left(x_{\min }^{*}-x_{\max }^{*}\right)+\mathrm{M}_{x}^{+}\left(r_{\max }^{*}-r_{\min }^{*}\right) \leq 0.
\end{equation}
The optimization problem (\ref{B1})-(\ref{B4}) can be regarded by iteratively solving variables in sequence. The variables (ymin, ymax) are firstly fixed and then variables $\left(y_{\min }, y_{\max }\right)$ are are firstly fixed and then variables $\left(x_{\min }, x_{\max }, r_{\min }, r_{\max }\right)$ are determined. Under such a case, (\ref{B7}) is smaller than zero at optimum because $\left(x_{\min }, x_{\max }, r_{\min }, r_{\max }\right)$ can attain zero and the optimization problem (\ref{B1})-(\ref{B4}) is a minimization problem. Similarity, the fulfillment of (\ref{B6}) can be proven when $\left(\lambda_{x}, \lambda_{y}\right)=(0,1)$.
\end{proof}

\end{document}